\numberwithin{equation}{section}
\newtheorem{Theorem}{Theorem}[section]
\newtheorem{Lemma}[Theorem]{Lemma}
\newtheorem{Proposition}[Theorem]{Proposition}
\theoremstyle{definition}
\newtheorem{Remark}[Theorem]{Remark}
\newtheorem*{ass*}{Assumptions}
\newtheorem*{rh-pb*}{Main RH problem}
\newtheorem*{properties*}{Properties}
\newtheorem*{fact-pb*}{Factorization problem}
\newtheorem*{add-conds*}{Additional conditions}}
\newcommand{\D}[1]{\mathbb{#1}}
\newcommand{\dd}{\mathrm{d}}
\newcommand{\eul}{\mathrm{e}}
\newcommand{\ii}{\mathrm{i}}
\providecommand{\accol}[1]{\lbrace#1\rbrace}
\providecommand{\croch}[1]{\lbrack#1\rbrack}
\newcommand{\DP}{\mathrm{DP}}
\newcommand{\ord}{\mathrm{O}}
\newcommand{\osmall}{\mathrm{o}}
\renewcommand{\Re}{\operatorname{Re}}
\DeclareMathOperator{\diag}{diag}
\DeclareMathOperator{\Res}{Res}
\begin{document}

%\allowdisplaybreaks

\renewcommand{\thefootnote}{$\star$}

\newcommand{\arXivNumber}{1603.08842}

\renewcommand{\PaperNumber}{095}

\FirstPageHeading

\ShortArticleName{A Riemann--Hilbert Approach for the Novikov Equation}
\ArticleName{A Riemann--Hilbert Approach\\ for the Novikov Equation\footnote{This paper is a~contribution to the Special Issue on Asymptotics and Universality in Random Matrices, Random Growth Processes, Integrable Systems and Statistical Physics in honor of Percy Deift and Craig Tracy. The full collection is available at \href{http://www.emis.de/journals/SIGMA/Deift-Tracy.html}{http://www.emis.de/journals/SIGMA/Deift-Tracy.html}}}

\Author{Anne BOUTET DE MONVEL~$^{\dag}$, Dmitry SHEPELSKY~$^\ddag$ and Lech ZIELINSKI~$^\S$}
\AuthorNameForHeading{A.~Boutet de Monvel, D.~Shepelsky and L.~Zielinski}

\Address{$^\dag$~Institut de Math\'ematiques de Jussieu-PRG, Universit\'e Paris Diderot, \\
\hphantom{$^\dag$}~75205 Paris Cedex 13, France}
\EmailD{\href{mailto:anne.boutet-de-monvel@imj-prg.fr}{anne.boutet-de-monvel@imj-prg.fr}}
\URLaddressD{\url{https://webusers.imj-prg.fr/~anne.boutet-de-monvel/}}

\Address{$^\ddag$~Mathematical Division, Institute for Low Temperature Physics,\\
\hphantom{$^\ddag$}~47 Nauki Avenue, 61103 Kharkiv, Ukraine}
\EmailD{\href{mailto:shepelsky@yahoo.com}{shepelsky@yahoo.com}}

\Address{$^\S$~LMPA, Universit\'e du Littoral C\^ote d'Opale,\\
\hphantom{$^\S$}~50 rue F.~Buisson, CS 80699, 62228 Calais, France}
\EmailD{\href{mailto:Lech.Zielinski@lmpa.univ-littoral.fr}{Lech.Zielinski@lmpa.univ-littoral.fr}}

\ArticleDates{Received June 08, 2016, in f\/inal form September 14, 2016; Published online September 24, 2016}

\vspace{-1mm}

\Abstract{We develop the inverse scattering transform method for the Novikov equation $u_t-u_{txx}+4u^2u_x=3u u_xu_{xx}+u^2u_{xxx}$ considered on the line $x\in(-\infty,\infty)$ in the case of non-zero constant background. The approach is based on the analysis of an associated Riemann--Hilbert (RH) problem, which in this case is a $3\times 3$ matrix problem. The structure of this RH problem shares many common features with the case of the Degasperis--Procesi (DP)  equation having quadratic nonlinear terms (see [Boutet~de Monvel A., Shepelsky D., \textit{Nonlinearity} \textbf{26} (2013), 2081--2107, arXiv:1107.5995]) and thus the Novikov equation can be viewed as a~``modif\/ied DP equation'', in analogy with the relationship between the Korteweg--de Vries (KdV) equation and the modif\/ied Korteweg--de Vries (mKdV) equation. We present parametric formulas giving the solution of the Cauchy problem for the Novikov equation in terms of the solution of the RH problem and discuss the possibilities to use the developed formalism for further studying of the Novikov equation.}

\Keywords{Novikov equation; Degasperis--Procesi equation; Camassa--Holm equation; inverse scattering transform; Riemann--Hilbert problem}

\Classification{35Q53; 37K15; 35Q15; 35B40; 35Q51; 37K40}

\vspace{-1mm}

\begin{flushright}
\begin{minipage}{70mm}
\it Dedicated to Percy Deift and Craig Tracy\\ on the occasion of their 70th birthdays.
\end{minipage}
\end{flushright}

\renewcommand{\thefootnote}{\arabic{footnote}}
\setcounter{footnote}{0}

\vspace{-4.5mm}

\section{Introduction} \label{sec:intro}

{\bf 1.1.}~In this paper we present an inverse scattering approach, based on an appropriate Riemann--Hilbert problem formulation, for the initial value problem for the Novikov equation \cite{HW08,MN02,N}
\begin{alignat}{3} \label{N1}
&u_t-u_{txx}+4u^2u_x=3u u_xu_{xx}+u^2u_{xxx},\qquad&&-\infty<x<+\infty,\quad t>0,&\\
&u(x,0)=u_0(x),&&-\infty<x<+\infty, & \nonumber %\label{IC}
\end{alignat}
where $u_0(x)$ is assumed to decay to a non-zero constant:
\begin{gather*}
u_0(x)\to\varkappa>0, \qquad x\to\pm\infty.
\end{gather*}
The solution $u(x,t)$ is also assumed to decay to $\varkappa$ for all $t>0$:
\begin{gather*}
u(x,t)\to\varkappa,\qquad x\to\pm\infty.
\end{gather*}
Introducing the momentum variable
\begin{gather*}
m\coloneqq u-u_{xx},
\end{gather*}
the Novikov equation \eqref{N1} can be written as
\begin{gather}\label{N2-0}
m_t+ (m_xu+3mu_x)u=0
\end{gather}
or, equivalently,
\begin{gather}\label{N2}
\big(m^{\frac{2}{3}}\big)_t+\big(u^2m^{\frac{2}{3}}\big)_x=0.
\end{gather}

{\bf 1.2.}~The Novikov equation \eqref{N1} was obtained in the search for a classif\/ication of integrable generalized Camassa--Holm equations of the form
\begin{gather*}
\big(1-\partial_x^2\big)u_t=F(u,u_x, u_{xx}, u_{xxx},\dots), \qquad u=u(x,t),
\qquad\partial_x=\partial/\partial x
\end{gather*}
possessing inf\/inite hierarchies of higher symmetries \cite{N} (see also~\cite{MN02}). Equation~\eqref{N1} listed by Novikov in \cite[equation~(31)]{N} attracted further considerable attention in the literature, f\/irst of all, as an example of nonlinear equation admitting, like the Camassa--Holm (CH)~\cite{CH93,CHH94} and the Degasperis--Procesi (DP)~\cite{DHH02,DP99} equations, peaked solutions (peakons), but having cubic (rather than quadratic) nonlinear terms. Another integrable Camassa--Holm type equation with cubic nonlinearities was discovered by Fokas~\cite{F} and Qiao \cite{Q06,Q07}.

Hone, Lundmark, and Szmigielski \cite{HLS09} obtained explicit formulas for multipeakon solutions of \eqref{N1}. Some smooth and nonsmooth soliton solutions were presented by Pan and Yi in \cite{PY15}. For studies concerned with blow-up phenomenon and the Cauchy problem for \eqref{N1} we refer the reader to \cite{CCL15, CGLQ16, G13,HH12,JN12, LLW13, NZ11, T11, YLZ12}.

{\bf 1.3.}~In \cite{N} Novikov presented a scalar Lax pair for~\eqref{N1}, which involves the third order derivative with respect to $x$. Hone and Wang~\cite{HW08} proposed a $3\times 3$ matrix Lax pair for~\eqref{N1}, which allowed presenting explicit formulas for peakon solutions~\cite{HLS09,HW08} on zero background. Recently, Matsuno~\cite{M13} presented parametric representations of smooth multisoliton solutions (as well as singular solitons with single cusp and double peaks) of \eqref{N1} on a constant (non-zero) background, using a Hirota-type, purely algebraic procedure. He also demonstrated that a~smooth soliton converges to a~peakon in the limit where the constant background tends to~$0$ while the velocity of the soliton is f\/ixed. Furthermore, he performed the asymptotic analysis of pure multisoliton solutions and noticed that the formulas for the phase shifts of the solitons as well as their peakon limits coincide with those for the Degasperis--Procesi (DP) equation~\cite{M05,M06}
\begin{gather}\label{DP1}
u_t-u_{txx}+4uu_x=3 u_xu_{xx}+uu_{xxx},
\end{gather}
which, in terms of $m$ reads
\begin{gather}\label{DP2}
\big(m^{\frac{1}{3}}\big)_t+\big(um^{\frac{1}{3}}\big)_x=0.
\end{gather}

Comparing with the Degasperis--Procesi equation, it is natural to view (at least formally) the Novikov equation as a~``modif\/ied DP equation'', in analogy with the relationship between the Korteweg--de Vries (KdV) equation
\begin{gather*}
u_t+6uu_x+u_{xxx}=0
\end{gather*}
and the modif\/ied Korteweg--de Vries (mKdV) equation
\begin{gather*}
u_t+6u^2u_x+u_{xxx}=0.
\end{gather*}

The subsequent analysis presented in the paper supports this point of view. Indeed, as we will show below, the implementation of the inverse scattering transform method involves a~Riemann--Hilbert problem of the same structure as in the case of the DP equation (recall that this is true when comparing the KdV and the mKdV equations).

{\bf 1.4.}~Recall that for the DP equation, the transformation $u(x,t)\mapsto \tilde u(x-\varkappa t,t)+\varkappa$ reduces~\eqref{DP1} and~\eqref{DP2} to
\begin{gather}\label{DP3}
\tilde u_t-\tilde u_{txx}+3\varkappa\tilde u_x+4\tilde u\tilde u_x=3 \tilde u_x\tilde u_{xx}+\tilde u\tilde u_{xxx}
\end{gather}
and
\begin{gather*}%\label{DP4}
\bigl(\tilde m^{\frac{1}{3}}\bigr)_t+\bigl(\tilde u\tilde m^{\frac{1}{3}}\bigr)_x=0,
\end{gather*}
respectively. Here $\tilde m\coloneqq\tilde u-\tilde u_{xx}+\varkappa$ and thus $\tilde u\to 0$ as $x\to\pm\infty$ provided $u\to\varkappa$ as $x\to\pm\infty$. Therefore, the study of the Cauchy problem for the DP equation in the form~\eqref{DP1} in the case of non-zero constant background is equivalent to the study of the Cauchy problem on zero background for the DP equation in the form~\eqref{DP3}, i.e., for the DP equation with \emph{non-zero linear dispersion term}. For the latter problem, the Riemann--Hilbert approach, which is a variant of the inverse scattering transform method, has been applied in \cite{BS13}, which allows obtaining a useful representation of the solution in a form suitable for the analysis of its long time behavior. Notice that the further transformation $\tilde u\mapsto\tilde{\tilde u}\colon\tilde u(x,t)=\varkappa\tilde{\tilde u}(x,\varkappa t)$, which preserves the zero background, allows reducing the study of \eqref{DP3} with any $\varkappa>0$ to the case of~\eqref{DP3} with $\varkappa=1$.

Similar arguments for the Novikov equation lead to the following: The transformations $u(x,t)=\tilde u(x-\varkappa^2t,t) + \varkappa$ and $u(x,t)=\varkappa\tilde {\tilde u}(x-\varkappa^2t,\varkappa^2 t) + \varkappa$ reduce the Cauchy problem for~\eqref{N1} (or~\eqref{N2}) on a non-zero constant background $u\to\varkappa$ as $x\to\pm\infty$ to the Cauchy problem on zero background ($\tilde u\to 0$ and $\tilde{\tilde u}\to 0$ as $x\to\pm\infty$) for the equations
\begin{gather}\label{N3}
\bigl(\tilde m^{\frac{2}{3}}\bigr)_t+\bigl(\big(\tilde u^2+2\varkappa \tilde u\big)\tilde m^{\frac{2}{3}}\bigr)_x=0,
\end{gather}
and
\begin{gather}\label{N4}
\bigl(\tilde{\tilde m}^{\frac{2}{3}}\bigr)_t+\bigl(\big(\tilde{\tilde u}^2+2\tilde{\tilde u}\big)\tilde{\tilde m}^{\frac{2}{3}}\bigr)_x=0,
\end{gather}
respectively. Here $\tilde m\coloneqq\tilde u-\tilde u_{xx}+\varkappa$ and $\tilde{\tilde m}=\tilde{\tilde u}-\tilde{\tilde u}_{xx}+1$.

{\bf 1.5.}~Henceforth, we consider the Cauchy problem for equation~\eqref{N4} on zero background, which, to simplify notations, will be written as
\begin{subequations} \label{N-Cauchy}
\begin{alignat}{3} \label{N5}
&\bigl(\hat{m}^{2/3}\bigr)_t+\bigl(\big(u^2+2u\big)\hat{m}^{2/3}\bigr)_x=0, \qquad&& -\infty<x<\infty,\quad t>0,& \\
&\hat{m}\equiv m+1=u-u_{xx}+1, &&& \label{hat-m}\\
&u(x,0)=u_0(x), &&-\infty<x<\infty, & \label{IC1}
\end{alignat}
where $u_0(x)$ is suf\/f\/iciently smooth and decays fast as $x\to\pm\infty$. Moreover, we assume that $u_0(x)$ satisf\/ies the sign condition
\begin{gather} \label{IC-sign}
u_0(x)-(u_0)_{xx}(x)+1>0.
\end{gather}
\end{subequations}
Then there exists~\cite{LLW13} a unique global solution $u(x,t)$ of~\eqref{N-Cauchy}, such that $u(x,t)\to 0$ as $x\to\pm\infty$ for all~$t$.

Notice that the Novikov equation \eqref{N5}, being written in terms of $u$ only, contains linear as well as quadratic dispersion terms:
\begin{gather*}
u_t-u_{txx}+4u^2u_x+8uu_x+3u_x=3u u_xu_{xx}+u^2u_{xxx}+3u_xu_{xx}+2uu_{xxx}.
\end{gather*}

{\bf 1.6.}~The analysis of Camassa--Holm-type equations by using the inverse scattering approach was initiated in \cite{C01,CL03,J03,L02} for the Camassa--Holm equation itself:
\begin{gather*}
u_t-u_{txx}+3uu_x=2u_xu_{xx}+uu_{xxx}.
\end{gather*}
A version of the inverse scattering method for the CH equation based on a Riemann--Hilbert (RH) factorization problem was proposed in~\cite{BS06,BS08} (another RH formulation of the inverse scattering transform is presented in~\cite{CGI06}). The RH approach has proved its ef\/f\/iciency in the study of the long-time behavior of solutions of both initial value problems~\cite{BIS10,BKST,BS-D} and initial boundary value problems \cite{BS-F} for the CH equation. In~\cite{BLS15,BS13} it has been adapted to the study of the Degasperis--Procesi equation.

In the present paper we develop the RH approach to the Novikov equation in the form \eqref{N5} on zero background, following the main ideas developed in~\cite{BS08,BS13}. To the best of our knowledge, no equations of the Camassa--Holm type with \emph{cubic} nonlinearity have been treated before by the inverse scattering method in the form of a RH problem.

A major dif\/ference between the implementations of the RH method to the CH equation, on one hand, and to the DP as well as Novikov equation, on the other hand, is that in the latter cases, the spatial equations of the associated Lax pairs are of the third order, which implies that when rewriting them in matrix form, one has to deal with $3\times 3$ matrix-valued equations, while in the case of the CH equation, they have a $2\times 2$ matrix structure, as in the cases of the most known integrable equations (KdV, mKdV, nonlinear Schr\"odinger, sine-Gordon, etc.). Hence, the construction and analysis of the associated RH problem become considerably more complicated.

In our approach, we propose (Section~\ref{sec:3}) an associated RH problem and give (Theo\-rem~\ref{thm:main}) a~representation of the solution~$u(x,t)$ of the initial value problem~\eqref{N-Cauchy} in terms of the solution of this RH problem evaluated at a distinguished point of the plane of the spectral parameter. Remarkably, the formulas for $u(x,t)$ obtained in this way have the same structure as the parametric formulas obtained in~\cite{M13} for pure multisoliton solutions.

\section{Lax pairs and eigenfunctions}

\begin{ass*}
Recall that we assume that the initial function $u_0(x)$ in~\eqref{IC1} is suf\/f\/iciently smooth with fast decay at $\pm\infty$ and satisf\/ies the sign condition~\eqref{IC-sign}.
\end{ass*}

Then, similarly to the case of the CH equation (see, e.g.,~\cite{C01}), the solution $\hat m(x,t)$ of~\eqref{N-Cauchy} satisf\/ies the sign condition $\hat m(x,t)>0$ for all $x\in\D{R}$ and all $t>0$.

\subsection{Lax pairs} \label{lax.pairs}

\subsubsection{A f\/irst Lax pair}

The Lax pair found by Hone and Wang \cite{HW08} for the Novikov equation in the form \eqref{N1} (or \eqref{N2-0}) reads
\begin{subequations} \label{Hone-Wang}\allowdisplaybreaks
\begin{gather}
\partial_x\begin{pmatrix}
\psi_1 \\
\psi_2 \\
\psi_3
\end{pmatrix} = \begin{pmatrix}
0 & zm & 1 \\
0 & 0 & zm \\
1 & 0 & 0
\end{pmatrix}\begin{pmatrix}
\psi_1 \\
\psi_2 \\
\psi_3
\end{pmatrix},\qquad m\coloneqq u-u_{xx},\\
\partial_t\begin{pmatrix}
\psi_1 \\
\psi_2 \\
\psi_3
\end{pmatrix} = \begin{pmatrix}
-uu_x & \frac{u_x}{z}-u^2mz & u_x^2 \vspace{1mm}\\
\frac{u}{z} & -\frac{1}{z^2} & -\frac{u_x}{z}-u^2mz \vspace{1mm}\\
-u^2 & \frac{u}{z} & uu_x
\end{pmatrix}\begin{pmatrix}
\psi_1 \\
\psi_2 \\
\psi_3
\end{pmatrix},
\end{gather}
\end{subequations}
where $z$ is the \emph{spectral parameter}.

\subsubsection{A modif\/ied Lax pair}

For the Novikov equation in the form \eqref{N5}, the Lax pair \eqref{Hone-Wang} has to be appropriately modif\/ied. While for the Camassa--Holm and Degasperis--Procesi equations the corresponding modif\/ication (when passing from the equation with zero linear dispersive term to that with a non-zero one) consists simply in replacing $m$ by $\hat m=m+1$, the modif\/ication for the Novikov equation turns out to be more involved.

\begin{Lemma}\label{lax-mod}
The Novikov equation~\eqref{N5} admits as Lax pair the system
\begin{subequations} \label{Lax-vec}
\begin{gather} %\label{Lax-vec-x}
\Phi_x = U \Phi,\qquad
%\label{Lax-vec-t}
\Phi_t = V \Phi,
\end{gather}
where $\Phi\equiv\Phi(x,t;z)$ and
\begin{gather}\label{U}
U(x,t;z)=\begin{pmatrix}
0 & z\hat m & 1 \\
0 & 0 & z\hat m \\
1 & 0 & 0
\end{pmatrix},\\
\label{V}
V(x,t;z)=\begin{pmatrix}
-(u+1)u_x+\frac{1}{3z^2}&\frac{u_x}{z}-(u^2+2u)\hat mz&u_x^2+1\\[1mm]
\frac{u+1}{z}&-\frac{2}{3z^2}&-\frac{u_x}{z}-(u^2+2u)\hat mz\\[1mm]
-u^2-2u&\frac{u+1}{z}&(u+1)u_x+\frac{1}{3z^2}
\end{pmatrix}.
\end{gather}
\end{subequations}
\end{Lemma}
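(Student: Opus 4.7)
The plan is to deduce the lemma from the Hone--Wang Lax pair \eqref{Hone-Wang}, which is known to be compatible precisely when the pair $(u^{\mathrm{old}},m^{\mathrm{old}})$ solves \eqref{N2-0}, by combining the Galilean-type change of variables from Section~1.4 with a scalar gauge transformation.

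Setting $\varkappa=1$ in Section~1.4, the substitution $u^{\mathrm{old}}(x,t)=u(x-t,t)+1$ converts equation \eqref{N1} for $u^{\mathrm{old}}$ into equation \eqref{N5} for $u$, with $m^{\mathrm{old}}=u^{\mathrm{old}}-u^{\mathrm{old}}_{xx}$ passing to $\hat m=u-u_{xx}+1$. If $\Phi^{\mathrm{old}}(x,t;z)$ satisfies \eqref{Hone-Wang}, then $\Phi(\xi,\tau;z)\coloneqq\Phi^{\mathrm{old}}(\xi+\tau,\tau;z)$ satisfies $\Phi_\xi=U_0\Phi$ and $\Phi_\tau=(V_0+U_0)\Phi$, where $U_0,V_0$ are the Hone--Wang matrices with $u^{\mathrm{old}}$ replaced by $u+1$ and $m^{\mathrm{old}}$ replaced by $\hat m$. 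A direct entry-by-entry substitution shows that $U_0$ already coincides with the matrix $U$ of \eqref{U}; that the identities $(u+1)^2\hat m-\hat m=(u^2+2u)\hat m$ and $(u+1)^2-1=u^2+2u$ convert the six off-diagonal entries of $V_0+U_0$ into those of $V$ in \eqref{V}; and that on the diagonal $V_0+U_0$ yields $-(u+1)u_x,\,-\frac{1}{z^2},\,(u+1)u_x$, which differ from the diagonal of \eqref{V} by one and the same scalar $\frac{1}{3z^2}$.

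This final discrepancy is removed by the scalar gauge transformation $\Phi\mapsto\eul^{t/(3z^2)}\Phi$, which leaves the spatial equation unaffected and adds $\frac{1}{3z^2}I$ to the temporal coefficient, producing exactly the matrix $V$ of \eqref{V}. Since any scalar factor $\eul^{g(t,z)}$ preserves the zero-curvature equation $U_t-V_x+[U,V]=0$, compatibility of \eqref{Lax-vec}--\eqref{V} is equivalent to that of the Hone--Wang pair with the substituted functions, and hence to \eqref{N5}. The main obstacle is the entrywise bookkeeping in the previous paragraph; no nontrivial PDE manipulation is required, since the PDE content has been imported from the Hone--Wang result. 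Alternatively, one could verify the lemma by a direct computation of $U_t-V_x+[U,V]$ and collection of Laurent coefficients in $z$: at all orders except one these yield algebraic identities, while at the remaining order they amount to \eqref{N5}.
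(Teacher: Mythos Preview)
Your derivation is correct, and it makes explicit precisely the argument that the paper only sketches. The paper does not supply a formal proof of this lemma: it simply states the modified Lax pair, prefacing it with the remark that ``the Lax pair \eqref{Hone-Wang} has to be appropriately modified'' and following it with the Remark that the additive freedom $V\mapsto V+cI$ has been used to make $V$ traceless. Your proof fills in exactly these two steps---the Galilean shift $u^{\mathrm{old}}=u+1$ (with the induced shift $x\mapsto x+t$ on the independent variables) and the scalar gauge $\Phi\mapsto\eul^{t/(3z^2)}\Phi$---and your entrywise verification is accurate, including the key identities $(u+1)^2\hat m-\hat m=(u^2+2u)\hat m$ and $(u+1)^2-1=u^2+2u$ for the off-diagonal terms and the common diagonal discrepancy $\tfrac{1}{3z^2}$.

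So your approach and the paper's implicit one coincide; you have simply written out what the paper leaves to the reader. The alternative you mention at the end (direct verification of $U_t-V_x+[U,V]=0$) would of course also work, but the route via \eqref{Hone-Wang} is cleaner and is what the paper has in mind.
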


\begin{Remark}
The freedom in adding to $V$ a constant (independent of $(x,t)$) term $c\cdot I$, where $I$ is the $3\times 3$ identity matrix, has been used in \eqref{V} in order to make $V$ traceless, which provides that the determinant of a matrix solution to the equation $\Phi_t=V\Phi$ is independent of~$t$. The same property holds for the equation $\Phi_x=U\Phi$ whose coef\/f\/icient $U$ is obviously traceless.
\end{Remark}

The coef\/f\/icient matrices $U$ and $V$ in~\eqref{Lax-vec} have singularities (in the extended complex $z$-plane) at $z=0$ and at $z=\infty$. In order to control the behavior of solutions to~\eqref{Lax-vec} as functions of the spectral parameter $z$ (which is crucial for the Riemann--Hilbert method), we follow a strategy similar to that adopted for the CH equation \cite{BS06,BS08}
and the DP equation~\cite{BS13}.

\subsubsection[A Lax pair appropriate for large $z$]{A Lax pair appropriate for large $\boldsymbol{z}$}

In order to control the large $z$ behavior of the solutions of \eqref{Lax-vec}, we will transform this Lax pair as follows (cf.~\cite{BC}):

\begin{Lemma}\label{lax-Q}
The Lax pair \eqref{Lax-vec} can be transformed into a new Lax pair
%\begin{subequations}
\begin{gather}\label{Lax-Q-form}
\hat\Phi_x-Q_x\hat\Phi=\hat U\hat\Phi,\qquad
\hat\Phi_t-Q_t\hat\Phi=\hat V\hat\Phi,
\end{gather}
%\end{subequations}
whose coefficients $Q(x,t;z)$, $\hat U(x,t;z)$, and $\hat V(x,t;z)$ have the following properties:
\begin{enumerate}\itemsep=0pt
\item[\rm(i)] $Q$ is diagonal;
\item[\rm(ii)] $\hat U=\ord(1)$ and $\hat V=\ord(1)$ as $z\to\infty$, whereas $Q_x$ is growing;
\item[\rm(iii)] the diagonal parts of $\hat U$ and $\hat V$ decay as $z\to\infty$;
\item[\rm(iv)] $\hat U\to 0$ and $\hat V\to 0$ as $x\to\pm\infty$.
\end{enumerate}
\end{Lemma}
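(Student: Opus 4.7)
The plan is to introduce an invertible gauge matrix $P=P(x,t;z)$ that asymptotically diagonalizes the leading $z\to\infty$ behaviour of both $U$ and $V$. Setting $\hat\Phi\coloneqq P^{-1}\Phi$ converts \eqref{Lax-vec} into
\begin{gather*}
\hat\Phi_x=\bigl(P^{-1}UP-P^{-1}P_x\bigr)\hat\Phi,\qquad
\hat\Phi_t=\bigl(P^{-1}VP-P^{-1}P_t\bigr)\hat\Phi,
\end{gather*}
and the task is then to split each conjugated coefficient into a diagonal growing part, to be identified with $Q_x$ resp.\ $Q_t$, and a bounded remainder with decaying diagonal, to be identified with $\hat U$ resp.\ $\hat V$.

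First I would analyze the large-$z$ spectrum of $U$. A direct determinant computation from \eqref{U} gives the characteristic polynomial
\begin{gather*}
\lambda^3-\lambda-z^2\hat m^2=0,
\end{gather*}
whose roots behave like $\lambda_j(x,t;z)=\omega^{j-1}z^{2/3}\hat m^{2/3}+\ord(1)$ with $\omega\coloneqq\eul^{2\pi\ii/3}$. Note that the strict $\ord(z)$ part of $U$ is nilpotent, so pointwise diagonalization is impossible; instead I would construct $P$ following the Beals--Coifman strategy used for the DP equation in~\cite{BC,BS13}, as a formal series
\begin{gather*}
P(x,t;z)=P_0(x,t;z)\bigl(I+z^{-2/3}P_1(x,t)+z^{-4/3}P_2(x,t)+\cdots\bigr),
\end{gather*}
where $P_0$ is a WKB-type matrix whose columns correspond to the three branches~$\lambda_j$ and the $P_k$ are determined recursively so as to cancel the off-diagonal and subleading diagonal entries of $P^{-1}UP-P^{-1}P_x$ at each order in $z^{-2/3}$. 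Defining $Q\coloneqq\diag(q_1,q_2,q_3)$ with $\partial_x q_j=\lambda_j$, the properties (i)--(iii) for the $x$-equation then hold by construction; the zero-curvature condition $U_t-V_x+[U,V]=0$, which is equivalent to the Novikov equation~\eqref{N5}, forces the same $P$ to simultaneously achieve the analogous decomposition for the $t$-equation, with $\partial_t q_j$ being read off as the leading diagonal of $P^{-1}VP-P^{-1}P_t$.

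For property~(iv) I would use that $u,u_x,u_{xx}\to 0$ as $x\to\pm\infty$, hence $\hat m\to 1$, so that $P_0$ and all the $P_k$ converge to $x$-independent (though $z$-dependent) limits; then $P_x,P_t\to 0$ and the conjugated coefficients collapse onto their diagonal limits, giving $\hat U,\hat V\to 0$. The main technical obstacle I foresee is the very first step: because the leading part of $U$ is nilpotent of rank~$2$, $P_0$ cannot be extracted from a naive eigenvector computation, and the natural expansion parameter is $z^{-2/3}$ rather than $z^{-1}$. This forces one to work on the three-sheeted Riemann surface of $z^{1/3}$ (equivalently, to introduce a uniformizing variable $k$ with $k^3$ proportional to $z^2$); checking that the recursion for the $P_k$ closes, that the resulting diagonal growing pieces are genuine $x$- and $t$-antiderivatives $q_j$, and that $\hat U,\hat V$ are single-valued on the correct sheets, will be the bulk of the work.
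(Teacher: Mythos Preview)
Your outline follows the generic Beals--Coifman recipe: diagonalize $U$ asymptotically by a formal series $P=P_0(I+z^{-2/3}P_1+\cdots)$ in the natural expansion parameter $z^{-2/3}$, and then read off $Q_x$ as the leading diagonal. This can be made to work, but it is a genuinely different and considerably heavier route than the paper's. The paper exploits a structural feature of the Novikov $U$ that lets it avoid formal series altogether. The proof is a \emph{two-step exact gauge}: first conjugate by the $z$-independent diagonal matrix $D(x,t)=\diag\{q,1,q^{-1}\}$ with $q=\hat m^{1/3}$, which converts $U$ into $q^2(x,t)\,U_\infty(z)+\tilde U^{(1)}(x,t)$ where $U_\infty(z)$ is the \emph{constant} matrix $\left(\begin{smallmatrix}0&z&1\\0&0&z\\1&0&0\end{smallmatrix}\right)$; second, since $U_\infty$ and the analogous $V_\infty$ commute, diagonalize them simultaneously by an $(x,t)$-independent $P(z)$, giving $Q_x=q^2\Lambda(z)$ exactly. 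All of the obstacles you anticipate---the nilpotent leading part, the fractional powers, the three-sheeted surface, closing the recursion---simply disappear, because after the $D$-gauge the diagonalization problem is for a fixed matrix, solved by elementary algebra, and $\hat U$, $\hat V$ come out as explicit rational expressions in the $\lambda_j$.

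What your approach would buy is generality: it is the route one would take if no such factorization $U\sim(\text{scalar})\cdot(\text{constant matrix})$ were available. What it costs is that your $P_0$ and all $P_k$ are $(x,t)$-dependent, so you must control $P^{-1}P_x$ order by order, argue that a finite truncation suffices for (ii)--(iii), and separately verify that the $x$- and $t$-diagonals integrate to a common $Q$; in the paper this last point is a one-line consequence of the Novikov equation in the form $(q^2)_t+((u^2+2u)q^2)_x=0$. The key idea you are missing is the preliminary $D$-gauge that strips the $\hat m$-dependence out of the leading term.
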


\begin{proof}
As in the case of the DP equation~\cite{BS13}, we perform this transformation into two steps:{\samepage
\begin{itemize}\itemsep=0pt
\item We transform \eqref{Lax-vec} into a system where the leading terms are represented as products of $(x,t)$-independent (matrix-valued) and $(x,t)$-dependent (scalar) factors.
\item We diagonalize the $(x,t)$-independent factors.
\end{itemize}}

{\bf First step.} Introducing $\tilde\Phi\equiv\tilde\Phi(x,t;z)$ by
\begin{gather*}
\tilde\Phi=D^{-1}\Phi,
\end{gather*}
where
$D(x,t)=\diag\big\{q(x,t),1,q^{-1}(x,t)\big\}$
and ($\hat m$ is as in \eqref{hat-m})
\begin{gather} \label{q}
q=q(x,t)\coloneqq\hat m^{1/3}(x,t),
\end{gather}
transforms \eqref{Lax-vec} into the new Lax pair
\begin{subequations} \label{Lax-vec-1}
\begin{gather} %\label{Lax-vec-1-x}
\tilde\Phi_x = \tilde U \tilde\Phi,\qquad
%\label{Lax-vec-1-t}
\tilde\Phi_t = \tilde V \tilde\Phi,
\end{gather}
where
\begin{gather}
\tilde U(x,t;z)=q^2(x,t)
\begin{pmatrix}
0 & z & 1 \\
0 & 0 & z \\
1 & 0 & 0
\end{pmatrix}+\begin{pmatrix}
-\frac{q_x}{q} & 0 & \frac{1}{q^2}-q^2 \\
0 & 0 & 0 \\
0 & 0 & \frac{q_x}{q}
\end{pmatrix}\nonumber\\
\label{U-tilde}
\hphantom{\tilde U(x,t;z)}{}\equiv q^2(x,t) U_\infty (z) + \tilde U^{(1)}(x,t)\\
\tilde V(x,t;z)= - \big(u^2+2u\big) q^2 \begin{pmatrix}
0 & z & 1 \\
0 & 0 & z \\
1 & 0 & 0
\end{pmatrix} + \begin{pmatrix}
\frac{1}{3z^2} & 0 & 1\vspace{1mm} \\
\frac{1}{z} & -\frac{2}{3 z^2} & 0\vspace{1mm} \\
0 & \frac{1}{z} & \frac{1}{3 z^2}
\end{pmatrix}\nonumber\\
\hphantom{\tilde V(x,t;z)=}{} + \begin{pmatrix}
(u^2+2u)\frac{q_x}{q} & 0 & (u^2+2u)q^2 +\frac{u_x^2+1}{q^2}-1\\
0 & 0 & 0 \\
0 & 0 & -(u^2+2u)\frac{q_x}{q}
\end{pmatrix}\nonumber\\
\hphantom{\tilde V(x,t;z)=}{}
+\frac{1}{z}
\begin{pmatrix}
0 & \frac{u_x}{q} & 0 \\
(u+1)q-1 & 0 & -\frac{u_x}{q} \\
0 & (u+1)q-1 & 0
\end{pmatrix}\nonumber\\
\label{V-tilde}
\hphantom{\tilde V(x,t;z)}{}
\equiv -\big(u^2(x,t)+2u(x,t)\big)q^2(x,t) U_\infty (z) + V_\infty (z) + \tilde V^{(1)}(x,t) + \frac{1}{z}\tilde V^{(2)}(x,t).
\end{gather}
\end{subequations}

{\bf Second step.} The commutator of $U_\infty$ and $V_\infty$ vanishes identically, i.e., $[U_\infty, V_\infty]\equiv 0$, which allows simultaneous diagonalization of $U_\infty$ and $V_\infty$. Indeed, we have (for $z\neq 0$)
%\begin{subequations}\label{la-a}
\begin{gather*}%\label{lambda}
P^{-1}(z)U_\infty (z)P(z)=\Lambda(z),\qquad
%\label{a}
P^{-1}(z)V_\infty (z)P(z)=A(z),
\end{gather*}
%\end{subequations}
where
\begin{subequations}\allowdisplaybreaks \label{La-P}
\begin{gather}\label{La}
\Lambda(z) = \begin{pmatrix}
\lambda_1(z) & 0 & 0 \\
0 & \lambda_2(z) & 0 \\
0 & 0 & \lambda_3(z)
\end{pmatrix},\\
\label{Aa}
A(z)=\frac{1}{3z^2}I+\Lambda^{-1}(z)
\equiv
\begin{pmatrix}
A_1(z) & 0 & 0 \\
0 &A_2(z) & 0 \\
0 & 0 &A_3(z)
\end{pmatrix},\\
\label{P}
P(z) = \begin{pmatrix}
\lambda_1^2(z) & \lambda_2^2(z) & \lambda_3^2(z) \\
z & z & z \\
\lambda_1(z) & \lambda_2(z) & \lambda_3(z)
\end{pmatrix},\\
\label{La-P-1}
P^{-1}(z) = \left(\begin{matrix}
(3\lambda_1^2(z)-1)^{-1}\! & 0 & 0 \\
0 & (3\lambda_2^2(z)-1)^{-1} & 0 \\
0 & 0 & \!(3\lambda_3^2(z)-1)^{-1}
\end{matrix}\right)\!
\left(\begin{matrix}
1 & \frac{z}{\lambda_1(z)} & \lambda_1(z) \\
1 & \frac{z}{\lambda_2(z)} & \lambda_2(z)\\
1 & \frac{z}{\lambda_3(z)} & \lambda_3(z)
\end{matrix}\right).\!\!\!
\end{gather}
\end{subequations}
Here $\lambda_1(z)$, $\lambda_2(z)$, and $\lambda_3(z)$ are the solutions of the algebraic equation
\begin{gather*}%\label{la-eqn}
\lambda^3 - \lambda=z^2,
\end{gather*}
enumerated in such a way that $\lambda_j(z)\sim\omega^jz^{2/3}$ as $z\to\infty$, where $\omega=\eul^{\frac{2\ii\pi}{3}}$.

Now, introducing $\hat\Phi\equiv\hat\Phi(x,t;z)$ by
\begin{gather*}
\hat\Phi=P^{-1}\tilde\Phi
\end{gather*}
transforms the Lax pair \eqref{Lax-vec-1} into the new Lax pair:
\begin{subequations} \label{Lax-vec-2}
\begin{gather}
\hat\Phi_x - q^2\Lambda(z)\hat\Phi = \hat U \hat\Phi,\\
\hat\Phi_t + \left(\big(u^2+2u\big)q^2\Lambda(z) - A(z)\right)\hat\Phi = \hat V \hat\Phi,
\end{gather}
where
\begin{gather} \label{hat-U}
\hat U(x,t;z)= P^{-1}(z) \tilde U^{(1)}(x,t) P(z),\\
\hat V(x,t;z)= P^{-1}(z)\left(\tilde V^{(1)}(x,t)+\frac{1}{z}\tilde V^{(2)}(x,t)\right)P(z).
\end{gather}
\end{subequations}
Here $\hat U(x,t;z)=\ord(1)$ and $\hat V(x,t;z)=\ord(1)$ as $z\to\infty$ due to the fact that $\tilde U^{(1)}$ and $\tilde V^{(1)}$ are upper triangular matrices. Moreover, the fact that $\tilde U^{(1)}$ and $\tilde V^{(1)}$ are traceless implies that the diagonal entries of $\hat U(x,t;z)$ and $\hat V(x,t;z)$ are $\ord(z^{-2/3})$ as $z\to\infty$.

Indeed, we can write $\hat U=\hat U^{(1)}\hat U^{(2)}$, where
\begin{subequations} \label{hatUV12}
\begin{gather}
\hat U^{(1)}=\left(\begin{matrix}
\frac{1}{3\lambda_1^2-1} & 0 & 0 \\
0 & \frac{1}{3\lambda_2^2-1} & 0 \\
0 & 0 & \frac{1}{3\lambda_3^2-1}
\end{matrix}\right),\\
\hat U^{(2)}=
\left(\begin{matrix}
c_2\lambda_1
 & c_1(\lambda_1\lambda_2-\lambda_2^2)+ c_2\lambda_2
	& c_1(\lambda_1\lambda_3-\lambda_3^2)+c_2\lambda_3 \vspace{1mm}\\
	c_1(\lambda_1\lambda_2-\lambda_1^2)+ c_2\lambda_1
	& c_2\lambda_2
	& c_1(\lambda_1\lambda_3-\lambda_3^2)+ c_2\lambda_3 \vspace{1mm}\\
	c_1(\lambda_1\lambda_3-\lambda_1^2)+ c_2\lambda_1
	& c_1(\lambda_2\lambda_3-\lambda_2^2)+ c_2\lambda_2
	& c_2\lambda_3
\end{matrix}\right)
\end{gather}
with $c_1 = q_x/q$ and $c_2=q^{-2}-q^2$. Notice that $\hat U(x,t;z)$ has a f\/inite limit at $z=0$.

We can also write $\hat V=\hat U^{(1)}\big(\hat V^{(1)}+\hat V^{(2)}\Lambda\big)$ where $\hat V^{(1)}$ has the form of $\hat U^{(2)}$ with $c_1$ and $c_2$ replaced by $c_3=-(u^2+2u)\frac{q_x}{q}$ and $c_4=(u^2+2u)q^2+\frac{u_x^2+1}{q^2}-1$, respectively, and{\samepage
\begin{gather}
\hat V^{(2)} =
\left(\begin{matrix}
2c_6
 & \begin{matrix} c_5\left(\frac{1}{\lambda_2}-\frac{1}{\lambda_1}\right)\\
 {}+c_6\left(\frac{\lambda_1}{\lambda_2}+\frac{\lambda_2}{\lambda_1}\right)\end{matrix}
	& \begin{matrix} c_5\left(\frac{1}{\lambda_3}-\frac{1}{\lambda_1}\right)\\
{} + c_6\left(\frac{\lambda_1}{\lambda_3}+\frac{\lambda_3}{\lambda_1}\right)\end{matrix} \vspace{1mm}\\
\begin{matrix}	c_5\left(\frac{1}{\lambda_1}-\frac{1}{\lambda_2}\right)\\
{} + c_6\left(\frac{\lambda_1}{\lambda_2}+ \frac{\lambda_2}{\lambda_1}\right)\end{matrix}
	& 2c_6
	& \begin{matrix} c_5\left(\frac{1}{\lambda_3}-\frac{1}{\lambda_2}\right)\\
{} + c_6\left(\frac{\lambda_2}{\lambda_3}+ \frac{\lambda_3}{\lambda_2}\right)\end{matrix} \vspace{1mm}\\
\begin{matrix}	c_5\left(\frac{1}{\lambda_1}-\frac{1}{\lambda_3}\right)\\
{} + c_6\left(\frac{\lambda_1}{\lambda_3}+ \frac{\lambda_3}{\lambda_1}\right)\end{matrix}
& \begin{matrix} c_5\left(\frac{1}{\lambda_2}-\frac{1}{\lambda_3}\right)\\
{} + c_6\left(\frac{\lambda_2}{\lambda_3}+ \frac{\lambda_3}{\lambda_2}\right)\end{matrix}
	& 2 c_6
\end{matrix}\right)
\end{gather}
with $c_5=\frac{u_x}{q}$, $c_6=(u+1)q-1$.}
\end{subequations}

Finally, in order to write \eqref{Lax-vec-2} in the desired form \eqref{Lax-Q-form} it suf\/f\/ices to f\/ind a solution of the system
\begin{subequations} \label{Q-dif}
\begin{gather} \label{Q-dif-x}
Q_x=q^2\Lambda(z),\\
\label{Q-dif-t}
Q_t=-\left(\big(u^2+2u\big)q^2\right)\Lambda(z) + A(z).
\end{gather}
\end{subequations}
We f\/irst notice that both equations are consistent. This follows directly from the equation
\begin{gather}\label{N-q}
(q^2)_t+\big(\big(u^2+2u\big)q^2\big)_x=0,
\end{gather}
which is just another form of the Novikov equation \eqref{N5}. We actually f\/ind that a solution $Q(x,t;z)$ of the system~\eqref{Q-dif} is given by the $3\times 3$ diagonal function
\begin{gather}\label{Q}
Q(x,t;z)=y(x,t)\Lambda(z)+tA(z)
\end{gather}
with
\begin{gather}\label{y}
y(x,t)\coloneqq x-\int_x^{\infty}(q^2(\xi,t)-1)\dd\xi.
\end{gather}
This solution $Q(x,t;z)$ is normalized in such a way that
\begin{gather*}%\label{Q-inf}
Q(x,t;z)\sim x\Lambda(z)+tA(z)\qquad\text{as} \quad x\to+\infty.\tag*{\qed}
\end{gather*}
\renewcommand{\qed}{}
\end{proof}

\subsection{Eigenfunctions} \label{eigenfunctions}

\subsubsection{Fredholm integral equations}

Introducing the $3\times 3$ matrix-valued function $M\equiv M(x,t;z)$ by
\begin{gather*}
M=\hat\Phi\eul^{-Q}
\end{gather*}
reduces the Lax pair \eqref{Lax-Q-form} to the system
%\begin{subequations}
\begin{gather}\label{M} %\label{M-x}
M_x-[Q_x,M]=\hat UM,\qquad
%\label{M-t}
M_t-[Q_t,M]=\hat VM.
\end{gather}
%\end{subequations}
Assume that the coef\/f\/icients in~\eqref{M}, which are expressed in terms of $u(x,t)$, are given. Then particular solutions of~\eqref{M} having well-controlled properties as functions of the spectral parameter $z$ can be constructed as solutions of the Fredholm integral equation (cf.~\cite{BC})
\begin{gather}
M(x,t;z)=I \nonumber\\
\label{M-int}
{}+ \int_{(x^*,t^*)}^{(x,t)}\eul^{Q(x,t;z)-Q(\xi,\tau;z)}\big( \hat U M(\xi,\tau;z)\dd\xi + \hat V M(\xi,\tau;z)\dd\tau\big)\eul^{-Q(x,t;z)+Q(\xi,\tau;z)},
\end{gather}
where the initial points of integration $(x^*,t^*)$ can be chosen dif\/ferently for dif\/ferent matrix entries of the equation. $Q$ being diagonal, \eqref{M-int} must be seen as the collection of scalar integral equations ($1\leq j,l\leq 3$)
\begin{gather*}
M_{jl}(x,t;z)=I_{jl}\\
{} +\int_{(x_{jl}^*,t_{jl}^*)}^{(x,t)}\eul^{Q_{jj}(x,t;z)-Q_{jj}(\xi,\tau;z)}\bigl((\hat UM)_{jl}(\xi,\tau;z)\dd\xi+(\hat VM)_{jl}(\xi,\tau;z)\dd\tau\bigr)\eul^{-Q_{ll}(x,t;z)+Q_{ll}(\xi,\tau;z)}.
\end{gather*}
Notice that choosing the $(x_{jl}^*,t_{jl}^*)$ appropriately allows obtaining eigenfunctions which are piecewise analytic w.r.t.\ the spectral parameter $z$ and thus can be used in the construction of Riemann--Hilbert problems associated with initial value problems~\cite{BS08} as well as initial boundary value problems~\cite{BS-F}.

\begin{figure}[ht]
\centering\includegraphics[scale=1]{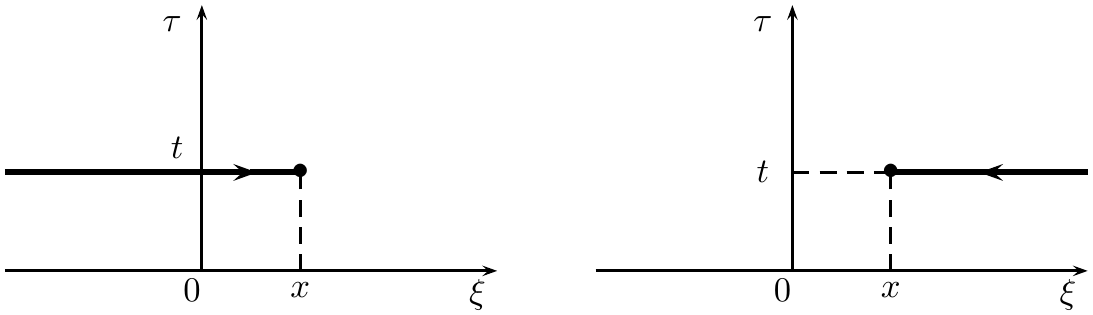}
\caption{Paths of integration. Left: $(x^*,t^*)=(-\infty,t)$. Right: $(x^*,t^*)=(+\infty,t)$.}
\label{fig:paths}
\end{figure}

In particular, for the Cauchy problem considered in the present paper, it is reasonable to choose these points to be $(-\infty,0)$ or $(+\infty,0)$ thus reducing the integration in \eqref{M-int} to paths parallel to the $x$-axis (see Fig.~\ref{fig:paths}) provided the integrals $\int_{(-\infty,0)}^{(-\infty,t)}$ and $\int_{(\infty,0)}^{(\infty,t)}$ vanish:
\begin{gather} \label{M-int1}
M(x,t;z)=I+\int_{(\pm)\infty}^x\eul^{Q(x,t;z)-Q(\xi,t;z)}\croch{\hat UM(\xi,t;z)}\eul^{-Q(x,t;z)+Q(\xi,t;z)}\dd\xi,
\end{gather}
or, in view of \eqref{Q} and \eqref{Q-dif-x},
\begin{gather} \label{M-int2}
M(x,t;z)=I+\int_{(\pm)\infty}^x\eul^{-\left(\int_x^{\xi}q^2(\zeta,t)\dd\zeta\right)\Lambda(z)}\croch{\hat UM(\xi,t;z)}\eul^{\big(\int_x^{\xi}q^2(\zeta,t)\dd\zeta\big)\Lambda(z)}\dd\xi.
\end{gather}
Since $q^2>0$, the domains (in the complex $z$-plane), where the exponential factors in~\eqref{M-int1} are bounded, are determined by the signs of $\Re\lambda_j(z)-\Re\lambda_l(z)$, $1\leq j\neq l\leq 3$.

\subsubsection{A new spectral parameter}

As in the case of the Degasperis--Procesi equation (see \cite{BS13, CIL10} and also~\cite{Ca82}) it is convenient to introduce a new spectral parameter $k$ such that
\begin{gather} \label{z-k}
z^2(k)=\frac{1}{3\sqrt{3}}\left(k^3+\frac{1}{k^3}\right).
\end{gather}
We thus have
\begin{gather} \label{la-k}
\lambda_j=\lambda_j(z(k))=\frac{1}{\sqrt{3}}\left(\omega^jk+\frac{1}{\omega^jk}\right),\qquad \text{where}\quad \omega=\eul^{\frac{2\ii\pi}{3}}.
\end{gather}
In what follows, we will work in the complex $k$-plane only. So, by a slight abuse of notation, we will write $\lambda_j(k)$ for $\lambda_j(z(k))$, and similarly for other functions of $z(k)$, e.g., $M(x,t;k)$ for $M(x,t;z(k))$ and $\Lambda(k)$ for $\Lambda(z(k))$.

\begin{figure}[ht]
\centering\includegraphics[scale=.75]{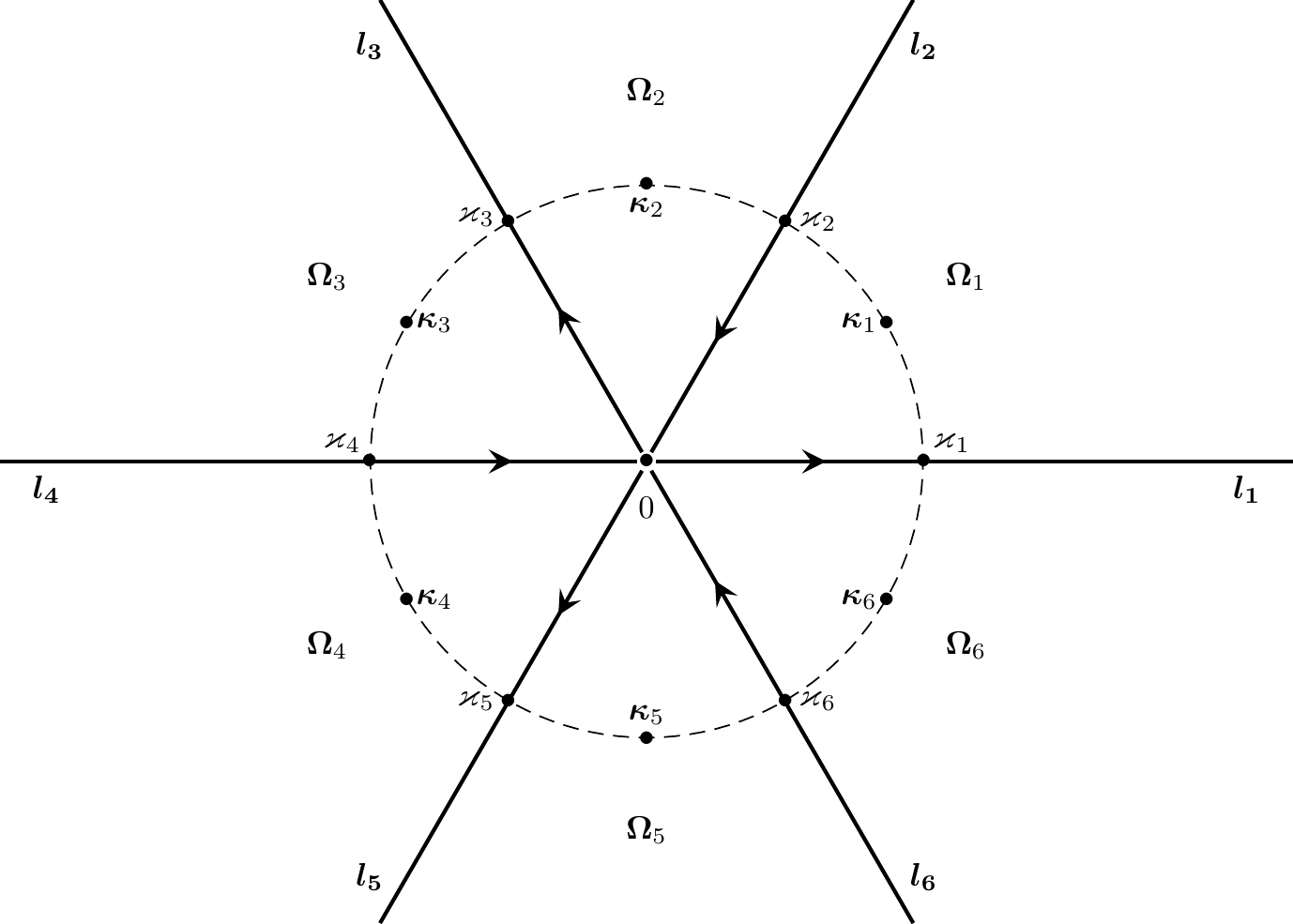}
\caption{Rays $l_{\nu}$, domains $\Omega_{\nu}$, and points $\varkappa_{\nu}$, $\kappa_l$ in the $k$-plane.}
\label{fig:sectors}
\end{figure}

The $\lambda_j$'s are the same, as functions of $k$, as in the case of the DP equation, see~\cite{BS13}. Thus, the contour $\Sigma=\{k\,|\,\Re\lambda_j(k)=\Re\lambda_l(k)\text{ for some } j\neq l\}$ is also the same; it consists of six rays
\begin{gather*}
l_{\nu}=\D{R}_+\eul^{\frac{\ii\pi}{3}(\nu-1)}=\D{R}_+\varkappa_{\nu},\qquad\nu=1,\dots,6
\end{gather*}
dividing the $k$-plane into six sectors
\begin{gather*}
\Omega_{\nu}=\left\{ k \,\Big\vert\, \frac{\pi}{3}(\nu-1)<\arg k < \frac{\pi}{3}\nu\right\},\qquad\nu=1,\dots,6.
\end{gather*}

In order that \eqref{M-int2} have a (matrix-valued) solution that is analytic, as a function of $k\in\D{C}\setminus\Sigma$, the initial points of integration $\infty_{jl}$ are specif\/ied as follows for each matrix entry $(j,l)$, $1\leq j,l\leq 3$:
\begin{gather}\label{signs}
\infty_{j,l}=
\begin{cases}
+\infty,&\text{if }\Re\lambda_j(k)\geq\Re\lambda_l(k),\\
-\infty, &\text{if }\Re\lambda_j(k)<\Re\lambda_l(k).
\end{cases}
\end{gather}
This means that we consider the system of scalar Fredholm integral equations, $1\leq j,\,l\leq 3$,
\begin{gather} \label{M-int3}
M_{jl}(x,t;k)=I_{jl}+\int_{\infty_{j,l}}^x\eul^{-\lambda_j(k)\int_x^{\xi}q^2(\zeta,t)\dd\zeta}\croch{(\hat UM)_{jl}(\xi,t;k)}\eul^{\lambda_l(k)\int_x^{\xi}q^2(\zeta,t)\dd\zeta}\dd\xi,
\end{gather}
where $k\in\D{C}$ and $I$ denotes the $3\times 3$ identity matrix.

\begin{Remark}
In spite of the fact that some of the coef\/f\/icients in \eqref{Lax-vec-2} seemingly depend on the f\/irst order of $z$ (for instance, the coef\/f\/icient $\frac{1}{z}P^{-1}(z)\tilde V^{(2)}P(z)$), which, as a function of $k$, is not rational, direct calculations show that $\hat U$ in~\eqref{M-int3} as well as $\hat V$ depends on $k$ rationally, through the $\lambda_j=\lambda_j(k)$'s, see~\eqref{hatUV12} and~\eqref{la-k}.
\end{Remark}

\begin{Proposition}[analyticity] \label{prop:p1}
Let $M(x,t;k)$ be the $($unique$)$ solution of the system of integral equations~\eqref{M-int3}, where the limits of integration $\infty_{j,l}$ are chosen according to~\eqref{signs}. Then
\begin{enumerate}\itemsep=0pt
\item[\rm(i)] $M$ is piecewise meromorphic with respect to $\Sigma$, as function of the spectral parameter $k$;
\item[\rm(ii)] $M(x,t;k)\to I$ as $k\to\infty$, where $I$ is the $3\times 3$ identity matrix;
\item[\rm(iii)] for $k\in\D{C}\setminus\Sigma$, $M$ is bounded as $x\to-\infty$ and $M\to I$ as $x\to+\infty$;
\item[\rm(iv)] $\det M\equiv 1$.
\end{enumerate}
\end{Proposition}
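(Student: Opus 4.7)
The plan is to analyze the scalar Fredholm equations \eqref{M-int3} sector by sector, via a Neumann series whose convergence is controlled by the sign rule \eqref{signs}, and then to deduce the four properties in turn.

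\textbf{Setup and key estimate.} Fix a sector $\Omega_\nu$ and $k\in\Omega_\nu$. For each entry $(j,l)$ the integration path in \eqref{M-int3} goes from $\infty_{j,l}$ to $x$ along the real line at fixed $t$. On this path $\xi-x$ has the sign opposite to $\infty_{j,l}$; since $q^2>0$, the quantity $\int_x^\xi q^2(\zeta,t)\dd\zeta$ has a definite sign, and the rule \eqref{signs} is precisely chosen so that $\Re\bigl((\lambda_l(k)-\lambda_j(k))\int_x^\xi q^2\dd\zeta\bigr)\le0$. Hence the exponential kernel in \eqref{M-int3} is bounded by $1$ throughout the integration. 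Combined with the fact (Lemma~\ref{lax-Q}(iv)) that $\hat U(\cdot,t;k)\to 0$ as $|\xi|\to\infty$ with integrable decay inherited from the assumed decay of $u_0$ and the coefficients $c_1,\dots,c_6$ in \eqref{hatUV12}, the standard Neumann iteration converges uniformly on compact subsets of $\Omega_\nu$ and produces a unique solution $M$ bounded in $x\in\D{R}$.

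\textbf{Proof of (i) analyticity and (iii) $x$-asymptotics.} Each Neumann iterate depends holomorphically on $k\in\Omega_\nu$, since $\lambda_j(k)$ is holomorphic there and $\hat U$ depends rationally on the $\lambda_j$'s (see the Remark preceding the Proposition); uniform convergence then gives holomorphy of $M$ on $\Omega_\nu$. Singularities of $\hat U^{(1)}$ at the zeros of $3\lambda_j^2-1$, together with possible discrete values of $k$ at which the Fredholm homogeneous equation has a nontrivial solution, account for poles inside $\Omega_\nu$, hence the word ``piecewise meromorphic''. For (iii), note that every entry for which $\infty_{j,l}=+\infty$ tends to $\delta_{jl}$ as $x\to+\infty$ by the trivial reason that the interval of integration shrinks; every entry for which $\infty_{j,l}=-\infty$ has off-diagonal $\delta_{jl}=0$ and tends to $0$ because $\hat U(\xi,t;k)\to 0$ as $\xi\to+\infty$ and the Neumann series estimate is uniform. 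Boundedness as $x\to-\infty$ follows directly from the uniform Neumann bound established above.

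\textbf{Proof of (ii) $k$-asymptotics.} Inspection of \eqref{hatUV12} via \eqref{la-k} shows that $\hat U^{(1)}=\ord(k^{-2})$ as $k\to\infty$ (since $\lambda_j^2\sim\omega^{2j}k^2/3$), while $\hat U^{(2)}$ grows at most linearly; hence $\hat U$ itself is $\ord(1)$ but its diagonal entries are $\ord(k^{-2/3})$. For diagonal entries $M_{jj}$ the exponential kernel is $1$ and the $\ord(k^{-2/3})$ decay of $\hat U_{jj}$ propagates through the Neumann series to give $M_{jj}\to1$. For off-diagonal entries one integrates by parts using $\partial_\xi \eul^{(\lambda_l-\lambda_j)\int_x^\xi q^2\dd\zeta}=(\lambda_l-\lambda_j)q^2\eul^{(\lambda_l-\lambda_j)\int_x^\xi q^2\dd\zeta}$, gaining a factor $1/(\lambda_l-\lambda_j)=\ord(k^{-1})$; the boundary contributions vanish due to decay of $\hat U$ and the choice of $\infty_{j,l}$, so $M_{jl}\to 0$.

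\textbf{Proof of (iv) $\det M\equiv 1$.} Since the original $U$ and $V$ in \eqref{U}, \eqref{V} are traceless (by construction for $V$; manifestly for $U$), and conjugation by $D$ and $P$ preserves the trace, the matrices $\hat U$ and $\hat V$ are traceless. From $M_x=[Q_x,M]+\hat U M$ one computes
\begin{gather*}
(\det M)_x=\det M\cdot \operatorname{tr}\bigl(M^{-1}([Q_x,M]+\hat U M)\bigr)=\det M\cdot \operatorname{tr}(\hat U)=0,
\end{gather*}
and similarly $(\det M)_t=0$ using tracelessness of $\hat V$ (together with $\operatorname{tr}\Lambda=\operatorname{tr} A=0$, which follows from Vieta applied to $\lambda^3-\lambda-z^2=0$). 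Combining with $M\to I$ as $x\to+\infty$ from (iii) yields $\det M\equiv1$.

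The main obstacle is the careful bookkeeping in the Neumann series: one must simultaneously use the sign rule \eqref{signs} to bound the exponential kernel, the decay of $\hat U$ in $\xi$ to get an $L^1$-type majorant for each entry with its specific choice of $\infty_{j,l}$, and the meromorphic structure of $P(z(k))^{-1}$ to identify the singular set inside each $\Omega_\nu$; everything else is a straightforward extraction from the integral representation.
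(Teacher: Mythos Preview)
Your proposal is correct and spells out precisely the Beals--Coifman Neumann-series argument to which the paper's terse proof defers, including the two points the paper singles out (decay of the diagonal of $\hat U$ for~(ii), tracelessness for~(iv)); your use of~(iii) rather than~(ii) to normalize $\det M$ is in fact the cleaner choice. Two inconsequential slips worth correcting: under \eqref{la-k} one has $\lambda_j\sim\omega^j k/\sqrt{3}$, so the diagonal of $\hat U$ is $\ord(k^{-1})$ rather than $\ord(k^{-2/3})$, and the zeros of $3\lambda_j^2-1$ are exactly the points $\varkappa_\nu\in\Sigma$, so they do not contribute interior poles in $\Omega_\nu$---the meromorphy there comes solely from the Fredholm alternative.
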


\begin{proof}
The proof follows the same lines as in~\cite{BC}. Notice that in order to have (ii), it is important that the diagonal part of $\hat U$ vanish as $k\to\infty$. Notice also that (iv) follows from the fact that the coef\/f\/icient matrices in~\eqref{Lax-vec-2} are traceless and from (ii).
\end{proof}

\begin{Proposition}[symmetries] \label{p2}
The solution $M(x,t;k)$ of \eqref{M-int3} satisfies the symmetry relations:
\begin{enumerate}\itemsep=0pt
\item[\rm{(S}1)]
$\Gamma_1\overline{M(x,t;\bar k)} \Gamma_1 = M(x,t;k)$, where $\Gamma_1 =
\left(\begin{smallmatrix}
0 & 1 & 0 \\
1 & 0 & 0 \\
0 & 0 & 1
\end{smallmatrix}\right)$;
\item[\rm{(S}2)]
$\Gamma_2 \overline{M(x,t;\bar k\omega^2)} \Gamma_2 = M(x,t;k)$, where $\Gamma_2 =
\left(\begin{smallmatrix}
	0 & 0 & 1 \\
	0 & 1 & 0 \\
	1 & 0 & 0
\end{smallmatrix}\right)$;
\item[\rm{(S}3)]
$\Gamma_3 \overline{M(x,t;\bar k\omega)} \Gamma_3 = M(x,t;k)$, where $\Gamma_3 =
\left(\begin{smallmatrix}
	1 & 0 & 0 \\
	0 & 0 & 1 \\
	0 & 1 & 0
\end{smallmatrix}\right)$;
\item[\rm{(S}4)]
$M(x,t;\frac{1}{k})=\overline{M(x,t;\bar k)}$.
\end{enumerate}
\end{Proposition}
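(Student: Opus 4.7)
The plan is to invoke the uniqueness of the solution of the Fredholm system~\eqref{M-int3}: for each of (S1)--(S4) I would show that the transformed matrix satisfies exactly the same system as $M(x,t;k)$, and then conclude by uniqueness. The whole argument rests on the elementary transformation laws of the $\lambda_j(k)$'s, which follow directly from~\eqref{la-k} together with $\overline{\omega}=\omega^2$:
\begin{gather*}
\overline{\lambda_j(\bar k)}=\lambda_{-j}(k),\qquad
\overline{\lambda_j(\omega^2\bar k)}=\lambda_{-j-2}(k),\qquad
\overline{\lambda_j(\omega\bar k)}=\lambda_{-j-1}(k),\qquad
\lambda_j(1/k)=\lambda_{-j}(k),
\end{gather*}
with indices read modulo~$3$. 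The induced permutations of $\{1,2,3\}$ are, respectively, the transpositions $\sigma_1=(1\;2)$, $\sigma_2=(1\;3)$, $\sigma_3=(2\;3)$ and, in the last case, again~$\sigma_1$; crucially, $\sigma_i$ is exactly the permutation realized by the matrix conjugation $A\mapsto\Gamma_iA\Gamma_i$.

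I would prove~(S1) as a template and then note that (S2)--(S4) are identical in structure. Set $\tilde M(x,t;k):=\Gamma_1\overline{M(x,t;\bar k)}\Gamma_1$, so that $\tilde M_{jl}(x,t;k)=\overline{M_{\sigma_1(j)\sigma_1(l)}(x,t;\bar k)}$. The goal is to show that $\tilde M$ satisfies~\eqref{M-int3}. Three facts have to be verified: (a)~\emph{compatibility of the integration limits} $\infty_{\sigma_1(j),\sigma_1(l)}(\bar k)=\infty_{j,l}(k)$, which follows from $\Re\lambda_{\sigma_1(j)}(\bar k)=\Re\overline{\lambda_{\sigma_1(j)}(\bar k)}=\Re\lambda_j(k)$ combined with the rule~\eqref{signs}; (b)~\emph{symmetry of the coefficient matrices}, $\overline{\hat U(\xi,t;\bar k)_{jl}}=\hat U(\xi,t;k)_{\sigma_1(j)\sigma_1(l)}$ and likewise for $\hat V$, obtained by inspection of the explicit formulas~\eqref{hatUV12}; and (c)~\emph{invariance of the exponential kernel}, which is immediate from (a) and the reality of $q^2$. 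Writing down the Fredholm equation for $M(x,t;\bar k)$ at entry $(\sigma_1(j),\sigma_1(l))$, conjugating, and relabelling the inner summation index of the matrix product via $n\mapsto\sigma_1(n)$ then yields exactly the system~\eqref{M-int3} for $\tilde M$; uniqueness of that solution gives~(S1).

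The proofs of~(S2) and~(S3) proceed verbatim with $\sigma_1,\Gamma_1$ replaced by $\sigma_i,\Gamma_i$ and $k\mapsto\bar k$ replaced by $k\mapsto\omega^2\bar k$ or $k\mapsto\omega\bar k$, respectively. For~(S4), the map $k\mapsto 1/k$ involves no complex conjugation on the matrix side, but it permutes the $\lambda_j$'s in exactly the same way as $k\mapsto\bar k$ followed by conjugation; the three checks go through without any $\Gamma$-dressing and directly produce $M(x,t;1/k)=\overline{M(x,t;\bar k)}$.

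The main technical work lies in step~(b). One must observe that the $k$-dependence of $\hat U$ and $\hat V$ enters only through the triple $(\lambda_1,\lambda_2,\lambda_3)$, and that the scalar factors $c_1,\dots,c_6$ in~\eqref{hatUV12} are all real. Granting this, the off-diagonal entries of $\hat U$, of the form $c_1(\lambda_a\lambda_b-\lambda_b^2)+c_2\lambda_b$, and of $\hat V^{(2)}$, involving the symmetric combinations $\lambda_a/\lambda_b+\lambda_b/\lambda_a$ and the antisymmetric ones $1/\lambda_b-1/\lambda_a$, are manifestly permuted as required by each $\sigma_i$; the diagonal prefactors $1/(3\lambda_j^2-1)$ transform correctly since they depend only on $\lambda_j^2$. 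This bookkeeping is the principal source of case analysis, after which the four symmetries are formal consequences of uniqueness.
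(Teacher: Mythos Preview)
Your proof is correct and follows essentially the same route as the paper: the paper's proof consists solely of listing the transformation laws of the $\lambda_j(k)$ under $k\mapsto\bar k$, $k\mapsto\bar k\omega$, $k\mapsto\bar k\omega^2$, and $k\mapsto 1/k$, leaving the uniqueness argument for the Fredholm system implicit. You have simply made explicit the three verifications (integration limits, coefficient symmetry, exponential kernel) and the relabelling via $\sigma_i$ that the paper suppresses.
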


\begin{proof} Indeed, the diagonal entries of the matrix $\Lambda(k)=\diag(\lambda_j(k))$ satisfy the following relations:
\begin{alignat*}{4}
&\overline{\lambda_1(\bar k)} = \lambda_2(k), \quad&& \overline{\lambda_3(\bar k)} =
\lambda_3(k), &&&\\
&\overline{\lambda_2(\bar k \omega)} = \lambda_3(k),\qquad && \overline{\lambda_1(\bar k \omega)}=\lambda_1(k), &&&\\
&\overline{\lambda_1(\bar k \omega^2)} = \lambda_3(k),\qquad &&\overline{\lambda_2(\bar k \omega^2)}=\lambda_2(k), \qquad &&
&\lambda_j(1/k) = \overline{\lambda_j(\bar k)},\quad j=1,2,3.&\tag*{\qed}
\end{alignat*}
\renewcommand{\qed}{}
\end{proof}

\begin{Remark}
From (S1)--(S3) it follows that the values of $M$ at $k$ and at $\omega k$ are related by
\begin{gather*}
M(x,t;k\omega) = C^{-1}M(x,t;k)C,\qquad \text{where}\quad
C=\left(\begin{matrix}
	0 & 0 & 1 \\
	1 & 0 & 0 \\
	0 & 1 & 0
\end{matrix}\right).
\end{gather*}
\end{Remark}

If $\lambda_j(k)=\lambda_l(k)$, $j\neq l$ for some value of the spectral parameter~$k$, then~$P$ at this value becomes degenerate (see~\eqref{La-P-1}), which in turn leads to a singularity for $\hat U$ and, consequently, for~$\hat\Phi$ and~$M$. These particular values of the spectral parameter are $\varkappa_{\nu}=\eul^{\frac{\ii\pi}{3}(\nu-1)}$, $\nu=1,\dots,6$. Taking into account the symmetries described in Proposition~\ref{p2} leads to the following proposition.

\begin{Proposition}[singularities at $\varkappa_{\nu}$] \label{p3}
The limiting values of $M(x,t;k)$ as $k$ approaches one of the points $\varkappa_{\nu}=\eul^{\frac{\ii\pi}{3}(\nu-1)}$, $\nu=1,\dots,6$ have pole singularities with leading terms of a specific matrix structure.
\begin{enumerate}\itemsep=0pt
\item[\rm(i)] As $k\to\varkappa_1=1$,
\begin{subequations} \label{singul}
\begin{gather}
M(x,t;k)=\frac{1}{k-1}
\left(\begin{matrix}
	1 & 1 & 1 \\
	-1 & -1 & -1 \\
	0 & 0 & 0
\end{matrix}\right)
\left(\begin{matrix}
	\alpha(x,t) & 0 & 0 \\
	0 & \alpha(x,t) & 0 \\
	0 & 0 & \beta(x,t)
\end{matrix}\right)\nonumber\\
\hphantom{M(x,t;k)=}{} +\tilde M+\ord(k-1),\label{singul-1}\\
M^{-1}(x,t;k)=\frac{1}{k-1}
\left(\begin{matrix}
	1 & 1 & 1 \\
	-1 & -1 & -1 \\
	0 & 0 & 0
\end{matrix}\right)
\left(\begin{matrix}
	\alpha_1(x,t) & 0 & 0 \\
	0 & \alpha_1(x,t) & 0 \\
	0 & 0 & \beta_1(x,t)
\end{matrix}\right)\nonumber\\
\hphantom{M^{-1}(x,t;k)=}{} +\tilde M^{(1)}+\ord(k-1),\label{singul-1-1}
\end{gather}
\end{subequations}
where $\alpha=-\overline{\alpha}$, $\beta=-\overline{\beta}$, $\alpha_1=-\overline{\alpha_1}$, $\beta_1=-\overline{\beta_1}$. Moreover, $(\alpha,\beta)\neq(0,0)$ iff $(\alpha_1,\beta_1)\neq(0,0)$ and in this case, the entries of $\tilde M(x,t)$ and $\tilde M^{(1)}(x,t)$ satisfy the relations
\begin{subequations}\label{M-prop}
\begin{alignat}{3}\label{M-prop-a}
&\tilde M_{31}=\tilde M_{32},\qquad&&\tilde M_{11}+\tilde M_{21}=\tilde M_{12}+\tilde M_{22}, & \\
\label{M-prop-b}
&\tilde M_{31}^{(1)}=\tilde M_{32}^{(1)},\qquad &&\tilde M_{11}^{(1)}+\tilde M_{21}^{(1)}=\tilde M_{12}^{(1)}+\tilde M_{22}^{(1)}.&
\end{alignat}
\end{subequations}
\item[\rm(ii)]
As $k\to\varkappa_2=\eul^{\frac{\ii\pi}{3}}$, properties \eqref{singul} hold with $k-1$ replaced by $k-\eul^{\frac{\ii\pi}{3}}$. Relations~\eqref{M-prop} also hold provided we replace the indices $1$, $2$, $3$ of the matrix entries by $2$, $3$, $1$, respectively.
\item[\rm(iii)] $M$ and $M^{-1}$ have similar leading terms at the other polar singularities $\varkappa_3,\dots,\varkappa_6$ in accordance with the symmetry conditions stated in Proposition~{\rm \ref{p2}}.
\end{enumerate}
\end{Proposition}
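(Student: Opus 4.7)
The plan is to trace the source of singularity back to the matrix $P^{-1}(k)$, which is the only factor in the chain $M=P^{-1}\tilde\Phi\eul^{-Q}$ that can blow up at the exceptional points $\varkappa_\nu$. Here $\tilde\Phi$ denotes the eigenfunction of the intermediate Lax pair~\eqref{Lax-vec-1}; it is analytic at every $\varkappa_\nu$ since the coefficients of that Lax pair are polynomial in $z(k)$. A symmetric argument based on $M^{-1}=\eul^{Q}\tilde\Phi^{-1}P$, with the role of $P^{-1}$ played by the rank drop of $P$ itself, will handle~\eqref{singul-1-1}.

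For part~(i) at $\varkappa_1=1$ I begin with the local analysis of $P^{-1}$. From~\eqref{la-k} one gets $\lambda_1(1)=\lambda_2(1)=-1/\sqrt3$ and $\lambda_3(1)=2/\sqrt3$; expanding the algebraic equation $\lambda^3-\lambda=z^2(k)$ at the double branch point (using $z^2(k)-\tfrac{2}{3\sqrt3}=\sqrt3(k-1)^2+\ord((k-1)^3)$) yields
\begin{gather*}
\lambda_{1,2}(k)=-\tfrac{1}{\sqrt3}\pm\ii(k-1)+\ord\bigl((k-1)^2\bigr),\qquad 3\lambda_{1,2}^2(k)-1=\mp2\ii\sqrt3(k-1)+\ord\bigl((k-1)^2\bigr).
\end{gather*}
Inserting these expansions into~\eqref{La-P-1} reveals that rows~$1$ and~$2$ of $P^{-1}$ develop simple poles at $k=1$ with opposite-sign residues both proportional to the common row $v_0=(1,-\sqrt3 z(1),-1/\sqrt3)$, while row~$3$ remains bounded; hence $\Res_{k=1}P^{-1}=\frac{1}{2\ii\sqrt3}(-1,1,0)^{\top}v_0$ is rank one. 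Multiplying on the right by $\tilde\Phi(x,t;1)\eul^{-Q(x,t;1)}$ transfers this rank-one pattern to $M$: the third row of $\Res_{k=1}M$ vanishes and the first two rows are opposite in sign. The $\diag(\alpha,\alpha,\beta)$ normalization (equality of the first two diagonal scalars) reflects on one hand the equality $Q_{11}(x,t;1)=Q_{22}(x,t;1)$ (from $\lambda_1(1)=\lambda_2(1)$ and $A_1(1)=A_2(1)$), and on the other hand the equality of the first two entries of the row $v_0\tilde\Phi(x,t;1)$, which is forced by the Fredholm normalization $\tilde\Phi\sim P(k)\eul^{x\Lambda(k)+tA(k)}$ as $x\to+\infty$: at $k=1$ the first two columns of the asymptotic model coincide (because $p_1(1)=p_2(1)$, $\lambda_1(1)=\lambda_2(1)$, $A_1(1)=A_2(1)$), so the two Fredholm solutions defined by this common leading behaviour are identical, and their projections onto $v_0$ agree.

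The reality $\alpha=-\overline\alpha$, $\beta=-\overline\beta$ then falls out of symmetry~(S1) at the fixed point $k=1=\bar k$: the identity $\Gamma_1\overline{M(x,t;k)}\Gamma_1=M(x,t;k)$ forces $\Gamma_1\overline R\Gamma_1=R$ on the residue~$R$, and combined with the rank-one form this yields the stated reality. The dual analysis for $M^{-1}$ in~\eqref{singul-1-1} is parallel and produces purely imaginary scalars $\alpha_1,\beta_1$. The equivalence $(\alpha,\beta)\neq(0,0)\Leftrightarrow(\alpha_1,\beta_1)\neq(0,0)$ together with relations~\eqref{M-prop} are obtained by matching powers of $(k-1)$ in $MM^{-1}\equiv I\equiv M^{-1}M$: the $(k-1)^{-2}$ term gives $RR_1=0$, automatic from the rank-one structures and tying $(\alpha_1,\beta_1)$ linearly to $(\alpha,\beta)$; the $(k-1)^{-1}$ term gives $R\tilde M^{(1)}+\tilde M R_1=0$ and its dual, and inserting the explicit forms of $R,R_1$ and reading row~$3$ and columns~$1,2$ produces exactly~\eqref{M-prop-a}--\eqref{M-prop-b}. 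Parts~(ii) and~(iii) require no new idea: at $\varkappa_2=\eul^{\ii\pi/3}$ the coalescing pair is $\lambda_2(k)=\lambda_3(k)$ (both equal to $1/\sqrt3$ at $k=\varkappa_2$), so the whole argument transplants under the cyclic relabelling $(1,2,3)\mapsto(2,3,1)$ of the matrix indices; the remaining four points $\varkappa_3,\dots,\varkappa_6$ are handled by transporting the local statements at $\varkappa_1,\varkappa_2$ via the symmetries~(S2)--(S4) of Proposition~\ref{p2}, which permute the $\varkappa_\nu$ and conjugate $M$ by fixed permutation matrices. The main obstacle I anticipate is the column-equality step in the second paragraph: it is where one really has to track the Fredholm normalization through the Jordan-block coalescence of eigenvectors of the asymptotic matrix $U_\infty$ at the branch point, and it is not forced by~(S1) alone.
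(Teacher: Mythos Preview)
Your overall strategy---locate the singularity in $P^{-1}$ and read off the rank-one residue---is exactly the paper's, but the step you yourself flag as the obstacle is in fact a genuine gap. The assertion that $\tilde\Phi$ is analytic at $\varkappa_\nu$ ``since the coefficients of that Lax pair are polynomial in $z(k)$'' does not stand: analytic coefficients guarantee the existence of \emph{some} analytic fundamental solution, but not that the particular solution picked out by the sectorial Fredholm normalization~\eqref{M-int3} is analytic there. The paper closes this by working with $\check M\coloneqq PMP^{-1}$ instead and checking that the kernel $P(k)\eul^{\pm\Lambda(k)\int q^2}P^{-1}(k)$ of the integral equation for $\check M$ is regular at $k=1$: the singular part of $P^{-1}$ is annihilated because the first two columns of $P(1)\eul^{\pm\Lambda(1)\int q^2}$ coincide (from $\lambda_1(1)=\lambda_2(1)$). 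That same column equality, applied to $\check M(1)P(1)$, is what forces the $\diag(\alpha,\alpha,\beta)$ structure---it is an algebraic fact about $P(1)$, not a consequence of the $x\to+\infty$ asymptotics you invoke (which do not by themselves pin down the solution, since different columns of $M$ are normalized at different endpoints).

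Your dual argument for $M^{-1}$ via $M^{-1}=\eul^{Q}\tilde\Phi^{-1}P$ also fails as written: since $\det\tilde\Phi=\det P\cdot\eul^{\operatorname{tr}Q}$ and $\det P(1)=0$, the factor $\tilde\Phi^{-1}$ is itself singular at $k=1$, so you cannot simply push the pole onto the rank drop of $P$. The paper sidesteps this by writing a separate integral equation for $\check M^{-1}=PM^{-1}P^{-1}$. Two smaller points: the condition $RR_1=0$ at order $(k-1)^{-2}$ is automatically satisfied by the rank-one forms and gives no relation between $(\alpha_1,\beta_1)$ and $(\alpha,\beta)$; the equivalence $(\alpha,\beta)\neq(0,0)\Leftrightarrow(\alpha_1,\beta_1)\neq(0,0)$ comes instead from $\det M\equiv1$ (if $R=0$ then $M$ is regular and invertible at $k=1$, hence $R_1=0$, and conversely). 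Your derivation of~\eqref{M-prop} from the $(k-1)^{-1}$ terms of $MM^{-1}=I$ and $M^{-1}M=I$ is a legitimate alternative to the paper's adjugate computation and does go through once the expansions~\eqref{singul} are in hand. (Also, the residue of $P^{-1}$ has row structure $(1,-1,0)^{\top}$, not $(-1,1,0)^{\top}$.)
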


\begin{proof}
(i-1) Let $\check M\coloneqq P(k)M(x,t;k)P^{-1}(k)$. By \eqref{M-int3} this function satisf\/ies the integral equation
\begin{gather*}
\check M(x,t;k)=I \\
{}+\int_{(\pm)\infty}^xP(k)\eul^{-\Lambda(k)\int_x^{\xi}q^2(\zeta,t)\dd\zeta}
P^{-1}(k)\croch{(\tilde U^{(1)}\check M)(\xi,t;k)}P(k)\eul^{\Lambda(k)\int_x^{\xi}q^2(\zeta,t)\dd\zeta}P^{-1}(k)\dd\xi.%\label{M-check}
\end{gather*}

We f\/irst show that, in spite of the singularity of $P^{-1}(k)$ at $k=1$, $\check M$ is regular at this point. It suf\/f\/ices to show that $P(k)\eul^{\pm\Lambda(k)\int_x^{\xi}q^2(\zeta,t)\dd\zeta}P^{-1}(k)$ is non-singular at $k=1$. As $k\to 1$,
\begin{gather*}
\lambda_1(k)=-\frac{1}{\sqrt{3}}\big(1-\ii\sqrt{3}(k-1)\big)+\ord\big((k-1)^2\big)\qquad \text{and} \\ \lambda_2(k)=-\frac{1}{\sqrt{3}}\big(1+\ii\sqrt{3}(k-1)\big)+\ord\big((k-1)^2\big),
\end{gather*}
hence $(3\lambda_1^2(k)-1)^{-1}=c/(k-1)+\ord(1)$ and $(3\lambda_2^2(k)-1)^{-1}=-c/(k-1)+\ord(1)$ with $c=\frac{\ii}{2\sqrt{3}}$\,, whereas $\lambda_3(1)=2/\sqrt{3}$ and $(3\lambda_3^2(1)-1)^{-1}=1/3$. Thus, according to \eqref{La-P-1} we f\/ind that
\begin{gather*}
P^{-1}(k)=\frac{c}{k-1}\begin{pmatrix}
	a_1 & a_2 & a_3\\
	-a_1 & -a_2 & -a_3\\
	0 & 0 & 0
\end{pmatrix}+\ord(1)\qquad \text{as} \quad k\to 1
\end{gather*}
with $a_1=1$, $a_2=-\sqrt{2}/\sqrt[4]{3}$, and $a_3=-1/\sqrt{3}$. On the other hand, since $\lambda_1(1)=\lambda_2(1)$ the f\/irst two columns of $P(1)$ are equal (see \eqref{P}) and the f\/irst two diagonal entries of the diagonal matrix $\eul^{\pm\Lambda(1)\int_x^{\xi}q^2(\zeta,t)\dd\zeta}$ are the same. Then, the f\/irst two columns of the product $P(1)\eul^{\pm\Lambda(1)\int_x^{\xi}q^2(\zeta,t)\dd\zeta}$ are the same, and $P(k)\eul^{\pm\Lambda(k)\int_x^{\xi}q^2(\zeta,t)\dd\zeta}P^{-1}(k)$ is regular at $k=1$ since its polar part vanishes:
\begin{gather*}
\frac{c}{k-1}P(1)\eul^{\pm\Lambda(1)\int_x^{\xi}q^2(\zeta,t)\dd\zeta}
\begin{pmatrix}
	a_1 & a_2 & a_3\\
	-a_1 & -a_2 & -a_3\\
	0 & 0 & 0
\end{pmatrix}\equiv 0.
\end{gather*}
Thus, $\check M$ is also regular at $k=1$.

Since $M(x,t;k)=P^{-1}(k)\check M(x,t;k)P(k)$ where the last two factors are regular at $k=1$, the leading term of $M$ at $k=1$ is
\begin{gather*}
\frac{1}{k-1}
\begin{pmatrix}
	1 & 1 & 1 \\
	-1 & -1 & -1 \\
	0 & 0 & 0
\end{pmatrix}
\begin{pmatrix}
	ca_1 & 0 & 0 \\
	0 & ca_2 & 0 \\
	0 & 0 & ca_3
\end{pmatrix}\check M(x,t;1)P(1).
\end{gather*}
The f\/irst two columns of $P(1)$ being equal, it is the same for the product $R(x,t)$ of the last three factors. For multiplication by the f\/irst factor we can replace $R$ by the diagonal matrix whose diagonal entries are the sums $R_{1j}+R_{2j}+R_{3j}$. Thus we arrive at~\eqref{singul-1} with some~$\alpha(x,t)$ and~$\beta(x,t)$. The relations $\alpha=-\bar\alpha$ and $\beta=-\bar\beta$ come from the symmetry~(S1) stated in Proposition~\ref{p2}.

(i-2) Similarly, $\check M^{-1}=PM^{-1}P^{-1}$ also satisf\/ies an integral equation with non-singular coef\/f\/icients:{\samepage
\begin{gather*} %\label{M-check-1}
\check M^{-1}(x,t;k)=I\\
{}-\int_{(\pm)\infty}^x P(k)\eul^{-\Lambda(k)\int_x^{\xi}q^2(\zeta,t)\dd\zeta}P^{-1}(k)\croch{\big(\check M^{-1}\tilde U^{(1)}\big)(\xi,t;k)}P(k)\eul^{\Lambda(k)\int_x^{\xi}q^2(\zeta,t)\dd\zeta}P^{-1}(k)\dd\xi,
\end{gather*}
which gives \eqref{singul-1-1}.}

(i-3) Using the fact that $\det M\equiv 1$, we calculate $M^{-1}$ starting from~\eqref{singul-1}, then, comparing the result with~\eqref{singul-1-1} we get dif\/ferent expressions of~$\alpha_1$ and~$\beta_1$ as linear combinations of~$\alpha$ and~$\beta$. Thus, $\alpha=\beta=0$ implies $\alpha_1=\beta_1=0$. Moreover, by comparison of these dif\/ferent expressions of~$\alpha_1$ and~$\beta_1$ we get the relations~\eqref{M-prop-a} for~$\tilde M$ provided~$\alpha$ or~$\beta$ is $\neq 0$. Similarly, starting from~$\tilde M^{-1}$ instead of~$\tilde M$ we get the relations \eqref{M-prop-b} for $\tilde M^{(1)}$ provided~$\alpha_1$ or~$\beta_1$ is $\neq 0$.

(ii)--(iii) Similar arguments apply at $k=\varkappa_2,\dots,\varkappa_6$.
\end{proof}

\section[How the solution $u(x,t)$ can be recovered from eigenfunctions]{How the solution $\boldsymbol{u(x,t)}$ can be recovered from eigenfunctions} \label{sec:3}

Starting from a solution $u(x,t)$ of the Novikov equation we have introduced
\begin{enumerate}\itemsep=0pt
\item[(a)]
a $3\times 3$ matrix-valued function $M(x,t;k)$, solution of the system of integral equations \eqref{M-int3},
\item[(b)]
a new variable $y(x,t)$ def\/ined by \eqref{y}.
\end{enumerate}
In this section, introducing $\hat M(y,t;k)\coloneqq M(x(y,t),t;k)$ through (a) and (b), we will show that $u(x,t)$ can be recovered in terms of $\hat M(y,t;k)$ evaluated at $k=\eul^{\frac{\ii\pi}{6}}$.

\subsection[A Lax pair appropriate for small $z$]{A Lax pair appropriate for small $\boldsymbol{z}$}

Coming back to the original Lax pair \eqref{Lax-vec}, let us introduce another Lax pair, whose solutions are well-controlled at $k=\kappa_l\coloneqq\eul^{\frac{\ii\pi}{6}+\frac{\ii\pi(l-1)}{3}}$, $l=1,\dots,6$; here $\{\kappa_l\}_{l=1}^6$ are characterized by the property $z(\kappa_l)=0$. Def\/ine $\tilde\Phi^{(0)}\equiv\tilde\Phi^{(0)}(x,t;k)$ (for $k\neq\kappa_1,\dots,\kappa_6$) by
\begin{gather*}
\tilde\Phi^{(0)}=P^{-1}(k)\Phi.
\end{gather*}
This reduces the Lax pair \eqref{Lax-vec} to the new one
\begin{subequations} \label{Lax-vec-0}
\begin{gather}\label{Lax-0}
\tilde\Phi^{(0)}_x -\Lambda(k)\tilde\Phi^{(0)}= \tilde U^{(0)}\tilde\Phi^{(0)},\\
\tilde\Phi^{(0)}_t - A(k)\tilde\Phi^{(0)} = \tilde V^{(0)}\tilde\Phi^{(0)},
\end{gather}
where
\begin{gather}
\tilde U^{(0)}(x,t;k)=P^{-1}(k)\left(U(x,t;k)-U_\infty(k)\right)P(k)\nonumber\\
\label{U-0}
{} =z^2(k)m(x,t)
\begin{pmatrix}
\frac{2\lambda_1(k)}{(3\lambda_1^2(k)-1)\lambda_1(k)} & \frac{\lambda_1(k)+\lambda_2(k)}{(3\lambda_1^2(k)-1)\lambda_1(k)} & \frac{\lambda_1(k)+\lambda_3(k)}{(3\lambda_1^2(k)-1)\lambda_1(k)}\vspace{1mm}\\
\frac{\lambda_2(k)+\lambda_1(k)}{(3\lambda_2^2(k)-1)\lambda_2(k)} & \frac{2\lambda_2(k)}{(3\lambda_2^2(k)-1)\lambda_2(k)} & \frac{\lambda_2(k)+\lambda_3(k)}{(3\lambda_2^2(k)-1)\lambda_2(k)}\vspace{1mm}\\
\frac{\lambda_3(k)+\lambda_1(k)}{(3\lambda_3^2(k)-1)\lambda_3(k)} & \frac{\lambda_3(k)+\lambda_2(k)}{(3\lambda_3^2(k)-1)\lambda_3(k)} & \frac{2\lambda_3(k)}{(3\lambda_3^2(k)-1)\lambda_3(k)}
\end{pmatrix},\\
\tilde V^{(0)}(x,t;k)=P^{-1}(k)\left(V(x,t;k)-V_\infty(k)\right)P(k)\nonumber\\
\label{V-0}
{} =P^{-1}(k)
\begin{pmatrix}
-(u+1)u_x & \frac{u_x}{z(k)}-(u^2+2u)\hat mz(k) & u_x^2\vspace{1mm}\\
\frac{u}{z(k)} & 0 & -\frac{u_x}{z(k)}-(u^2+2u)\hat mz(k)\vspace{1mm}\\
-u^2-2u & \frac{u}{z(k)} & (u+1)u_x
\end{pmatrix}P(k).
\end{gather}
\end{subequations}

\begin{Remark}
Notice that $\tilde U^{(0)}(x,t;k)$ has a f\/inite limit at $k=\kappa_l$, $l=1,\dots, 6$. Indeed, $\lambda_j(\kappa_l)=0$ for one, and only one, value of $j=1$, $2$ or $3$. For example, for $k=\kappa_1=\eul^{\frac{\ii\pi}{6}}$ we have $\lambda_2(\kappa_1)=0$ whereas $\lambda_1(\kappa_1)=-1$ and $\lambda_3(\kappa_1)=1$. Moreover, $\lim_{k\to\kappa_1}\frac{z^2(k)}{\lambda_2(k)}=\lambda_1(\kappa_1)\lambda_3(\kappa_1)=-1$ and thus{\samepage
\begin{gather*}%\label{tilde-u0}
\lim_{k\to\kappa_1}\tilde U^{(0)}(x,t;k)=m(x,t)
\begin{pmatrix}
0 & 0 & 0 \\
-1 & 0 & 1 \\
0 & 0 & 0
\end{pmatrix}\not\equiv 0.
\end{gather*}
Similarly if $k\to\kappa_l$, $l=2,\dots,6$.}

We emphasize that this is dif\/ferent from the case of the DP equation, where, on the contrary, the equality $\tilde U^{(0)}(x,t;\kappa_1)=0$ holds for all~$x$ and~$t$.
\end{Remark}

\subsection{Eigenfunctions}

\subsubsection[Eigenfunctions appropriate for small $z$]{Eigenfunctions appropriate for small $\boldsymbol{z}$}

We introduce $M^{(0)}\equiv M^{(0)}(x,t;k)$ by
\begin{gather*}
M^{(0)}=\tilde\Phi^{(0)}\eul^{-x\Lambda-tA},
\end{gather*}
and determine $M^{(0)}$ as the solution of a system of integral equations similar to the system \eqref{M-int3} determining~$M$:
\begin{gather} \label{M0-int3}
M^{(0)}_{jl}(x,t;k)= I_{jl}+\int_{\infty_{j,l}}^x\eul^{-\lambda_j(k)(\xi-x)} \croch{(\tilde U^{(0)}M^{(0)})_{jl}(\xi,t;k)}\eul^{\lambda_l(k)
(\xi-x)}\dd\xi.
\end{gather}
In the case of the DP equation, $M^{(0)}(x,t;z(k))\big|_{z=0}\equiv I$. For the Novikov equation, this is not true; but, since $\tilde U^{(0)}(\xi,t;\kappa_1)\Delta\tilde U^{(0)}(x,t;\kappa_1)\equiv 0$ for any $\xi$ and $x$, and any diagonal matrix~$\Delta$, the solution of~\eqref{M0-int3} for $k=\kappa_1=\eul^{\frac{\ii\pi}{6}}$ can be written explicitly:
\begin{gather*} %\label{M0-0}
M_{jl}^{(0)}(x,t;\kappa_1)=
I+\int_{\infty_{jl}}^x\eul^{-\lambda_j(\kappa_1)(\xi-x)}\tilde U_{jl}^{(0)}(\xi,t;\kappa_1)\eul^{\lambda_l(\kappa_1)(\xi-x)}\dd\xi\equiv I_{jl}+L_{jl}(x,t),
\end{gather*}
where
\begin{gather*} %\label{N}
L(x,t) =
\begin{pmatrix}
0 & 0 & 0 \\
\int_x^{\infty}m(\xi,t)\eul^{x-\xi}\dd\xi & 0 & \int_{-\infty}^x m(\xi,t)\eul^{\xi-x}\dd\xi\\
0 & 0 & 0
\end{pmatrix}.
\end{gather*}
Similarly for $k=\kappa_2,\dots,\kappa_6$. Using that $m=u-u_{xx}$, we see that the non-zero entries~$L_{21}$ and~$L_{23}$ reduce to
\begin{gather*}
L_{21}=u+u_x,\qquad L_{23}=u-u_x,
\end{gather*}
and thus $M^{(0)}\big(x,t;\eul^{\frac{\ii\pi}{6}}\big)$ can be explicitly expressed in terms of $u(x,t)$:
\begin{gather} \label{M0-1}
M^{(0)}\big(x,t;\eul^{\frac{\ii\pi}{6}}\big) =
\begin{pmatrix}
1 & 0 & 0 \\
u+u_x & 1 & u-u_x \\
0 & 0 & 1
\end{pmatrix} (x,t).
\end{gather}

\subsubsection{Comparison of eigenfunctions}

We will get the value of $M$ at $k=\eul^{\frac{\ii\pi}{6}}$ from that of $M^{(0)}$ by using that $M$ and $M^{(0)}$ are related. We indeed have
\begin{gather*}
M=P^{-1}D^{-1}\Phi\eul^{-y\Lambda-tA},\qquad
M^{(0)}=P^{-1}\Phi^{(0)}\eul^{-x\Lambda-tA},
\end{gather*}
where $\Phi$ and $\Phi^{(0)}$ are solutions of the same system of linear dif\/ferential equations \eqref{Lax-vec}. They are then related by $\Phi=\Phi^{(0)}C$ where $C\equiv C(k)$ is independent of $(x,t)$. Thus, $M$ and $M^{(0)}$ are related by
\begin{gather*}
M(x,t;k)=P^{-1}(k)D^{-1}(x,t)P(k)M^{(0)}(x,t;k)\eul^{x\Lambda(k)+tA(k)}C(k)\eul^{-y(x,t)\Lambda(k)-tA(k)}.
\end{gather*}
Now, since for $k\not\in\Sigma$, $M^{(0)}$ and $M$ are bounded as $x\to-\infty$ and have the same limit as $x\to+\infty$:
\begin{gather*}
M,\,M^{(0)}\xrightarrow[x\to+\infty]{}I,
\end{gather*}
it follows that $C(k)\equiv I$. Finally, we get
\begin{gather*}%\label{MM-rel}
M(x,t;k)=P^{-1}(k)D^{-1}(x,t)P(k)M^{(0)}(x,t;k)\eul^{(x-y(x,t))\Lambda(k)},
\end{gather*}
where $y$ is as in \eqref{y}:
\begin{gather*}
y(x,t)=x-\int_x^{\infty}\big(q^2(\xi,t)-1\big)\dd\xi.
\end{gather*}
In particular, at $k=\kappa_1\equiv\eul^{\frac{\ii\pi}{6}}$ we have
\begin{gather*}%\label{k0}
P^{-1}(k)D^{-1}(x,t)P(k)\big|_{k=\eul^{\frac{\ii\pi}{6}}}=
\begin{pmatrix}
\frac{q^{-1}+q}{2} & 0 & \frac{q^{-1}-q}{2} \\
1-q^{-1} & 1 & 1-q^{-1} \\
\frac{q^{-1}-q}{2} & 0 & \frac{q^{-1}+q}{2}
\end{pmatrix}
\end{gather*}
and thus, using \eqref{M0-1},
\begin{gather} \label{M-1}
M\big(x,t;\eul^{\frac{\ii\pi}{6}}\big) = \begin{pmatrix}
\frac{q^{-1}+q}{2}\eul^{y-x} & 0 & \frac{q^{-1}-q}{2}\eul^{x-y} \\
(1-q^{-1}+u+u_x)\eul^{y-x} & 1 & (1-q^{-1}+u-u_x)\eul^{x-y} \\
\frac{q^{-1}-q}{2}\eul^{y-x} & 0 & \frac{q^{-1}+q}{2}\eul^{x-y}
\end{pmatrix}.
\end{gather}

\subsection[Recovering $u(x,t)$ from eigenfunctions]{Recovering $\boldsymbol{u(x,t)}$ from eigenfunctions}

\begin{Proposition}
Let $u(x,t)$ be the solution of the Cauchy problem~\eqref{N-Cauchy} for the Novikov equation. Let $\hat M(y,t;k)\coloneqq M(x(y,t),t;k)$ where $M(x,t;k)$ is the solution of the system~\eqref{M-int3} and $y(x,t)$ is defined by \eqref{y}, both built on $u(x,t)$.

Then we can recover $u(x,t)$ from values of the eigenfunctions $\hat M_{jl}(y,t;k)$ at $k=\eul^{\frac{\ii\pi}{6}}$. We indeed have the following parametric representation:
\begin{subequations} \label{u-x}
\begin{gather}\label{u-u}
u(x,t)=\hat u(y(x,t),t),
\end{gather}
where $x(y,t)$ and $\hat u(y,t)$ are given by
\begin{gather}\label{x-y}
x(y,t)=y+\frac{1}{2}\ln\frac{\hat M_{33}\big(y,t;\eul^{\frac{\ii\pi}{6}}\big)}{\hat M_{11}\big(y,t;\eul^{\frac{\ii\pi}{6}}\big)},\\
\hat u(y,t)=\frac{1}{2}\hat N_1(y,t)\left(\frac{\hat M_{33}\big(y,t;\eul^{\frac{\ii\pi}{6}}\big)}{\hat M_{11}(y,t;\eul^{\frac{\ii\pi}{6}})}\right)^{1/2}+\frac{1}{2}\hat N_3(y,t)\left(\frac{\hat M_{33}\big(y,t;\eul^{\frac{\ii\pi}{6}}\big)}{\hat M_{11}\big(y,t;\eul^{\frac{\ii\pi}{6}}\big)}\right)^{-1/2}-1,\label{u-y}
\end{gather}
where
\begin{gather}
\hat N_k(y,t)\coloneqq\sum_{j=1}^3\hat M_{jk}\big(y,t;\eul^{\frac{\ii\pi}{6}}\big), \qquad k=1,2,3.\label{N-y}
\end{gather}
\end{subequations}
\end{Proposition}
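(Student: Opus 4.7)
The plan is to derive the proposition as a direct algebraic consequence of the explicit formula \eqref{M-1} for $M\bigl(x,t;\eul^{\ii\pi/6}\bigr)$, which the previous subsection already established by comparing the eigenfunctions $M$ and $M^{(0)}$ and using \eqref{M0-1}. Everything after that is bookkeeping.

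First, I would read off $x-y$ from the corner entries of \eqref{M-1}. Since
\[
M_{11}\bigl(x,t;\eul^{\ii\pi/6}\bigr)=\tfrac{q^{-1}+q}{2}\eul^{y-x},\qquad M_{33}\bigl(x,t;\eul^{\ii\pi/6}\bigr)=\tfrac{q^{-1}+q}{2}\eul^{x-y},
\]
we immediately get $M_{33}/M_{11}=\eul^{2(x-y)}$, whence $x-y=\tfrac12\ln(M_{33}/M_{11})$. After passing to the $y$-variable via $\hat M(y,t;k)\coloneqq M(x(y,t),t;k)$, this is exactly \eqref{x-y}.

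Second, I would compute the column sums $N_k(x,t)\coloneqq\sum_{j=1}^{3}M_{jk}\bigl(x,t;\eul^{\ii\pi/6}\bigr)$ directly from \eqref{M-1}. Using $\tfrac{q^{-1}+q}{2}+\tfrac{q^{-1}-q}{2}=q^{-1}$ in the first and third columns yields
\[
N_1=(1+u+u_x)\eul^{y-x},\qquad N_2=1,\qquad N_3=(1+u-u_x)\eul^{x-y}.
\]
Multiplying $N_1$ by $\eul^{x-y}$ and $N_3$ by $\eul^{y-x}$ and adding gives $N_1\eul^{x-y}+N_3\eul^{y-x}=2(1+u)$, hence $u=\tfrac12\bigl(N_1\eul^{x-y}+N_3\eul^{y-x}\bigr)-1$. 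Substituting $\eul^{x-y}=(M_{33}/M_{11})^{1/2}$ from the first step and rewriting in the $y$-variable yields \eqref{u-y} with $\hat N_k$ as in \eqref{N-y}, and together with $u(x,t)=\hat u(y(x,t),t)$ this is \eqref{u-u}.

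The main point requiring care is not a calculation but a consistency check: we must know that $x\mapsto y(x,t)$ defined by \eqref{y} is a diffeomorphism of $\D{R}$ so that $\hat M(y,t;k)$ and $\hat u(y,t)$ are well defined and the inversion $x=x(y,t)$ in \eqref{x-y} makes sense. This follows from $y_x=q^2=\hat m^{2/3}>0$, which in turn is guaranteed by the preservation of the sign condition $\hat m>0$ noted right after the \emph{Assumptions} in Section~2, together with the decay $q^2\to 1$ as $x\to\pm\infty$ coming from $u(\cdot,t)\to 0$. With this in hand the three algebraic steps above complete the proof.
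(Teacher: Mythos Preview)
Your proof is correct and follows essentially the same route as the paper: both derive \eqref{x-y} from the ratio $M_{33}/M_{11}$ in \eqref{M-1}, then left-multiply \eqref{M-1} by $(1\ 1\ 1)$ to obtain the column sums $N_k$, and solve $N_1\eul^{x-y}+N_3\eul^{y-x}=2(1+u)$ for $u$ before passing to the $(y,t)$ variables. Your added remark on the invertibility of $x\mapsto y(x,t)$ via $y_x=q^2>0$ is a useful explicit justification that the paper leaves implicit.
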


\begin{proof}
\eqref{x-y} follows from \eqref{M-1} and from the def\/inition \eqref{y} of $y(x,t)$. Further, multiplying \eqref{M-1} by the row vector
$(\begin{matrix}1 & 1 & 1\end{matrix})$ from the left and introducing
\begin{gather*}
N(x,t)\equiv\begin{pmatrix}N_1(x,t)&N_2(x,t)&N_3(x,t)\end{pmatrix}\coloneqq\begin{pmatrix}1&1&1\end{pmatrix}M\big(x,t;\eul^{\frac{\ii\pi}{6}}\big)
\end{gather*}
we have
\begin{gather*}
N_1(x,t)= (1+u(x,t)+u_x(x,t))\eul^{y-x},\\
N_2(x,t)=1,\\
N_3(x,t)=(1+u(x,t)-u_x(x,t))\eul^{x-y}
\end{gather*}
and thus $u(x,t)$ can be obtained in terms of $N(x,t)$ as follows:
\begin{gather}
u(x,t) = \frac{1}{2}N_1(x,t)\eul^{x-y} + \frac{1}{2}N_3(x,t)\eul^{y-x} - 1.
\label{N-x2}
\end{gather}
Now notice that \eqref{x-y} reads
\begin{gather}
\eul^{2(x-y)} = \frac{\hat M_{33}\big(y,t;\eul^{\frac{\ii\pi}{6}}\big)}
{\hat M_{11}\big(y,t;\eul^{\frac{\ii\pi}{6}}\big)}.\label{e-xy}
\end{gather}
Finally, introducing $\hat N_k$ as in \eqref{N-y}, using~\eqref{e-xy}, and writing~\eqref{N-x2} in the variables~$(y,t)$ (so that $N(x,t)=\hat N(y,t)$), we arrive at~\eqref{u-y}.
\end{proof}

\begin{Remark}
From \eqref{y} and \eqref{N-q}, another expression for $\hat u(y,t)$ follows:
\begin{gather}
(\hat u(y,t)+1)^2=\partial_t\,x(y,t)+1.
\end{gather}
Notice that the analogous expression in the case of the DP equation looks dif\/ferently~\cite{BS13}:
\begin{gather*}
\hat u_{\DP}(y,t)=\partial_t\,x(y,t)+1.
\end{gather*}
\end{Remark}

\begin{Remark}The structure of~\eqref{u-x} coincides with that for the multisoliton solution in \cite{M13} (see formulas~(3.3a) and~(3.3b) in~\cite{M13}), taking into account the relationship between solutions of the Novikov equation~\eqref{N1} on a non-zero constant background and solutions of \eqref{N5} on the zero background, presented in the Introduction.
\end{Remark}

\section{Riemann--Hilbert problem}

In Section~\ref{sec:3} we have shown how we can express the solution $u$ of the Cauchy problem for the Novikov equation by evaluating certain eigenfunctions -- solutions of the Lax pair equations. Notice they were def\/ined using the solution $u$ itself.

In the framework of the Riemann--Hilbert approach to the Cauchy problem for an integrable nonlinear equation, one is looking for obtaining these eigenfunctions in terms of the solution of an appropriate factorization problem, of Riemann--Hilbert type. The factorization problem is formulated in the complex plane of a spectral parameter $k$ whereas $x$ and $t$ play the role of parameters, and the data for this problem are uniquely determined, in spectral terms, by the initial data for the Cauchy problem.

\subsection[RH problem satisf\/ied by $\hat M$]{RH problem satisf\/ied by $\boldsymbol{\hat M}$}\label{RH-hatM}

Let $M(x,t;k)$ be as in Section \ref{sec:3}, solution of the system of integral equations~\eqref{M-int3}. The key observation is that the limiting values $M_{\pm}(x,t;k)$ (on $l_{\nu}$) of $M(x,t;k')$ as $k'\to k$ from the positive or negative side of~$l_{\nu}$, $\nu=1,\dots,6$ are related as follows:
\begin{gather}\label{M-scat}
M_+(x,t;k)=M_-(x,t;k)\eul^{Q(x,t;k)}S_0(k)\eul^{-Q(x,t;k)}, \qquad k\in l_{\nu}.
\end{gather}
We indeed have $M_{\pm}=\hat\Phi_{\pm}\eul^{-Q}$ where $\hat\Phi_{\pm}$ are two solutions of the system of ordinary dif\/ferential equations~\eqref{Lax-vec-2}. They are then related by $\hat\Phi_+=\hat\Phi_-S_0$ where $S_0$ is a matrix independent of~$(x,t)$. Considering \eqref{M-scat} at $t=0$ we see that $S_0(k)$ is completely determined by $u(x,0)$, i.e., by the initial data for the Cauchy problem \eqref{N-Cauchy}, via the solution $M(x,0;k)$ of the system~\eqref{M-int3} whose coef\/f\/icients are determined by~$u(x,0)$:
\begin{gather*}
S_0(k)=\eul^{-Q(x,0;k)}M_-^{-1}(x,0;k)M_+(0,0;k)\eul^{Q(x,0;k)}
\end{gather*}
(recall that $Q(x,0;k)=\left(x-\int_x^\infty(q^2(\xi,0)-1)\dd\xi\right)\Lambda(k)$ and thus $Q(x,0;k)$ is also determined by~$u(x,0)$).

Moreover, $S_0(k)$ has a special matrix structure: for $k\in l_1\cup l_4\equiv\D{R}$,
\begin{gather}\label{S-SS-1}
S_0(k)=
\begin{pmatrix}
1 & 0 & 0\\
-r(k) & 1 & 0\\
0 & 0 & 1
\end{pmatrix}
\begin{pmatrix}
1 & \overline{r(k)} & 0\\
0 & 1 & 0\\
0 & 0 & 1
\end{pmatrix},
\end{gather}
where $r(k)\in L^\infty(\D{R})$ and $r(k)=\ord(k^{-1})$ as $k\to\pm\infty$ (this structure follows from the analysis of the behavior of
$M_\pm(x,0;k)$ as $x\to\pm\infty$; for the details, see \cite[Section~3.1]{BS13}), whereas the expression of~$S_0(k)$ for the other parts of~$\Sigma$ follows from the symmetries stated in Proposition~\ref{p2}. Thus the jump matrices on all parts of the contour are determined in terms of a single scalar function, the \emph{reflection coefficient}~$r(k)$.

On the other hand, in order to have explicit exponentials in the r.h.s.\ of~\eqref{M-scat}, one replaces the pair of parameters $(x,t)$ by $(y,t)$, where $y=y(x,t)$ was introduced in~\eqref{y}, see also \eqref{Q}. Following this observation, \eqref{M-scat} can be rewritten as
\begin{gather}\label{jump-y}
\hat M_+(y,t;k)=\hat M_-(y,t;k)\eul^{y\Lambda(k)+tA(k)}S_0(k) \eul^{-y\Lambda(k)-tA(k)},\qquad k\in\Sigma
\end{gather}
where $\Sigma=l_1\cup\dots\cup l_6$ and
\begin{gather}\label{hat-M}
M(x,t;k)=\hat M(y(x,t),t;k).
\end{gather}

Now \eqref{jump-y} says that $\hat M$ -- as in \eqref{hat-M} -- is solution of a factorization problem:

\begin{fact-pb*} Given $S_0(k)$, $k\in\Sigma$, f\/ind a piece-wise (w.r.t.\ $\Sigma$) meromorphic (in $k$), $3\times 3$ matrix-valued function $\hat M(y,t;k)$ ($y$ and $t$ are parameters), whose limiting values satisfy the jump condition \eqref{jump-y}.
\end{fact-pb*}

This condition, being supplemented by conditions at possible poles of $\hat M(y,t;k)$, by a normali\-za\-tion condition, and by certain structural conditions, will constitute the RH problem satisf\/ied by~$\hat M$.

\begin{add-conds*} The dependence on $k$ of the eigenvalues $\lambda_j(k)$ in \eqref{M-int3} is exactly the same as in the case of the DP equation, which implies that most conditions involved in the RH problem for the Novikov equation have exactly the same form as for the Degasperis--Procesi equation, cf.~\cite{BS13}.
\begin{enumerate}\itemsep=0pt
\item For the residue conditions at poles (if any) $k_n\in\D{C}\setminus\Sigma$ of $\hat M(y,t;k)$, see \cite[Section~3.2]{BS13} and \cite[Section~2.3.5]{BS15}. Recall the following result (\cite{BC}): Generically, there are at most a~f\/inite number of poles $k_n$ lying in $\D{C}\setminus\Sigma$, each of them being simple, with residue conditions of a special matrix form. In particular,
\begin{gather*}
\Res_{k=k_n}\hat M(y,t;k)=\lim_{k\to k_n}\hat M(y,t;k)\eul^{y\Lambda(k)+tA(k)}v_n\eul^{-y\Lambda(k)-tA(k)},
\end{gather*}
where $v_n$ is some constant $3\times 3$ matrix with only one non-zero entry at a position depending on the sector of $\D{C}\setminus\Sigma$ to which $k_n$ belongs.
\item The normalization condition $\hat M(y,t;\infty)=I$ is Proposition~\ref{prop:p1}(ii).
\item The structural conditions on the polar parts of $\hat M$ at $\varkappa_1,\dots,\varkappa_6$ have exactly the same form as for the DP equation, cf.~\cite{BS13}.
\item On the other hand, the structural conditions at $k=\kappa_1,\dots,\kappa_6$, see~\eqref{M0-1} and~\eqref{M-1}, are dif\/ferent from those for the DP equation (cf.~(3.12) and~(3.13b) in~\cite{BS13}).
\end{enumerate}
\end{add-conds*}

Summarizing, we get that the eigenfunctions $\hat M(y,t;k)$ -- def\/ined as in Section \ref{sec:3} -- satisfy the following Riemann--Hilbert problem:

\begin{rh-pb*}[for the Novikov equation]
Given $r(k)\in L^\infty(\D{R})$ such that $r(k)=\ord(k^{-1})$ as $k\to\pm\infty$ and $\accol{k_n,v_n}_{n=1}^N$, f\/ind a piece-wise (w.r.t.\ $\Sigma=l_1\cup\dots\cup l_6$) meromorphic (in the complex $k$-plane), $3\times 3$ matrix-valued function $\hat M(y,t;k)$ satisfying the following conditions:
\begin{enumerate}\itemsep=0pt
\item[(1)]\emph{Jump conditions}:
%\begin{subequations}
\begin{gather*}%\label{RH-hat-M}
\hat M_+(y,t;k)=\hat M_-(y,t;k)S(y,t;k),\qquad k\in\Sigma\setminus\accol{\varkappa_1,\dots,\varkappa_6},
\end{gather*}
where $\varkappa_{\nu}\coloneqq\eul^{\frac{\ii\pi}{3}(\nu-1)}$ and
\begin{gather*}
S(y,t;k)=\eul^{y\Lambda(k)+tA(k)}S_0(k)\eul^{-y\Lambda(k)-tA(k)},\end{gather*}
%\end{subequations}
where $S_0(k)$ is given in terms of $r(k)$ by \eqref{S-SS-1} for $k\in l_1\cup l_4$ and then by the respective symmetries for the other parts of $\Sigma$. Here $\Lambda(k)$ and $A(k)$ are diagonal $3\times 3$ matrices with diagonal entries $\lambda_i(k)$ and $A_i(k)$, respectively, see \eqref{La} and \eqref{Aa}, where we have performed the change of parameter $z=z(k)$, see~\eqref{z-k} and~\eqref{la-k}.
\item[(2)] \emph{Residue conditions} (generic): For some $k_n\in\D{C}\setminus\Sigma$, $n=1,\dots,N$,
\begin{gather}\label{res}
\Res_{k=k_n}\hat M(y,t;k)=\lim_{k\to k_n}\hat M(y,t;k)\eul^{y\Lambda(k)+tA(k)}v_n\eul^{-y\Lambda(k)-tA(k)},
\end{gather}
where $v_n$ is some constant $3\times 3$ matrix with only one non-zero entry, whose position depends on the sector of $\D{C}\setminus\Sigma$ containing $k_n$ (dictated by the order of $\Re\lambda_j(k)$ in the sector, see~\cite{BC}). For example, if $k_n\in\Omega_1$, the non-zero entry of $v_n$ can be either~$(v_n)_{12}$ or~$(v_n)_{23}$. Then the positions (as well as the values) of the non-zero entries of~$v_n$ in the other sectors of~$\D{C}\setminus\Sigma$ are determined by the symmetries (S1)--(S3) from Proposition~\ref{p2}.
\item[(3)] \emph{Normalization condition}:
\begin{gather*}
\hat M(y,t;k)=I+\osmall(1)\qquad \text{as}\quad k\to\infty.
\end{gather*}
\item[(4)] \emph{Boundedness condition}:
$\hat M(y,t;k)$ and $\hat M^{-1}(y,t;k)$ are bounded in the closure of each component of $\D{C}\setminus\Sigma$ apart from vicinities of $k_1,\dots,k_N$ and $\varkappa_1,\dots,\varkappa_6$.
\item[(5)] \emph{Structural conditions} at $k=\varkappa_1,\dots,\varkappa_6$, where $\varkappa_{\nu}\coloneqq\eul^{\frac{\ii\pi}{3}(\nu-1)}$:
\begin{enumerate}\itemsep=0pt
\item[i)] The singular behavior of $\hat M$ and $\hat M^{-1}$ at $k=\varkappa_1=1$ is described by the structural conditions~\eqref{singul},~\eqref{M-prop} (where $\alpha$, $\beta$, $\alpha_1$, $\beta_1,\dots$ are not specif\/ied).
\item[ii)] At $k=\varkappa_2,\dots,\varkappa_6$ we have similar structural conditions on the singularities of~$\hat M$ and~$\hat M^{-1}$ (see Proposition~\ref{p3}, (ii)--(iii)).
\end{enumerate}
\item[(6)] \emph{Structural conditions} at $k=\kappa_1,\dots,\kappa_6$, where $\kappa_l\coloneqq\eul^{\frac{\ii\pi}{6}+\frac{\ii\pi}{3}(l-1)}$:
\begin{enumerate}\itemsep=0pt
\item[i)]
For $k=\kappa_1=\eul^{\frac{\ii\pi}{6}}$,
\begin{gather} \label{M-struc}
\hat M\big(y,t;\eul^{\frac{\ii\pi}{6}}\big)=\begin{pmatrix}
\frac{\hat q^{-1}+\hat q}{2} & 0 & \frac{\hat q^{-1}-\hat q}{2} \\
p_1 & 1 & p_2 \\
\frac{\hat q^{-1}-\hat q}{2} & 0 & \frac{\hat q^{-1}+\hat q}{2}
\end{pmatrix}
\begin{pmatrix}
f & 0 & 0 \\
0 & 1 & 0 \\
0 & 0 & f^{-1}
\end{pmatrix}
\end{gather}
with some (unspecif\/ied) $\hat q(y,t)>0$, $f(y,t)>0$, $p_1(y,t)\in\D{R}$ and $p_2(y,t)\in\D{R}$ (see~\eqref{M-1}).
\item[ii)]
For $k=\kappa_2,\dots,\kappa_6$, the corresponding structure of $\hat M(y,t;\kappa_l)$ follows from~\eqref{M-struc} taking into account the symmetries of Proposition~\ref{p2}.
\end{enumerate}
\end{enumerate}
\end{rh-pb*}

Now we observe that this RH problem can be def\/ined using the initial data of the Cauchy problem.

\begin{properties*}[of the main RH problem]
Let $u_0(x)$ be given satisfying the assumptions made for the Cauchy problem \eqref{N-Cauchy}.
\begin{enumerate}\itemsep=0pt
\item[i)]
$u_0(x)$ def\/ines the main RH problem as follows.
\begin{enumerate}\itemsep=0pt
\item[ia)] We f\/irst def\/ine the jump matrix $S_0(k)$ by
\begin{gather*}
S_0(k)=\eul^{-Q(0,0;k)}M_-^{-1}(0,0;k)M_+(0,0;k)\eul^{Q(0,0;k)},
\end{gather*}
where $M(x,0;k)$ and $Q(x,0;k)$ can be completely determined by $u_0(x)$. Notice that $S_0(k)$ has necessarily the structure \eqref{S-SS-1} with some $r(k)$.
\item[ib)] The $k_n$'s in $\D{C}\setminus\Sigma$ and the $v_n$'s in \eqref{res} are determined by $u_0(x)$.
\end{enumerate}
\item[ii)] This RH problem has a solution.
\item[iii)] The solution is unique.
\end{enumerate}
\end{properties*}

\begin{proof}
ia) The expression giving $S_0(k)$ derives from \eqref{M-scat}. Further, $M(x,0;k)$ is the solution of the system of integral equations~\eqref{M-int3} written for $t=0$, and this system is completely determined by $u_0(x)$ through $\hat m_0(x)\coloneqq u_0(x)-u_0''(x)+1$. By \eqref{q} we indeed have $q^2(x,0)=\hat m_0^{2/3}(x)$, and in the def\/inition of $\hat U(x,0;k)$ (see~\eqref{hat-U} and~\eqref{U-tilde}) the term $\hat m(x,0)$ is also $\hat m_0(x)$. Similarly for $Q(x,0;k)=y(x,0)\Lambda(k)$ where $y(x,0)$ is determined by $q^2(x,0)$.

ib) The $k_n$'s are determined by $M(x,0;k)$ and $v_n$ is determined by~\eqref{res} considered at $t=0$, hence also by $u_0(x)$.

ii) Under the assumptions made on $u_0$ there exists a solution $u(x,t)$ of the Cauchy prob\-lem~\eqref{N-Cauchy}. The main RH problem has then as solution the associated eigenfunction $\hat M(y,t;k)$.

iii) The normalization condition supplemented by the structural conditions~\eqref{M-struc} and~\eqref{singul}, \eqref{M-prop} provides uniqueness of the solution (see \cite[Proposition~3.2]{BS-JGA}).
\end{proof}

\subsection{Main theorem}\label{sec:main-thm}

Starting directly from the main RH problem we show how its solution gives a representation of the solution of the Cauchy problem for the Novikov equation.

\begin{Theorem}\label{thm:main}
Let $u_0(x)$ satisfy assumptions made for the Cauchy problem \eqref{N-Cauchy} for the Novikov equation. Let $\hat M(y,t;k)$ be the solution of the associated main RH problem {\rm (1)--(6)}. By evaluating~$\hat M$ at $k=\eul^{\frac{\ii\pi}{6}}$ we get the parametric representation~\eqref{u-x} for the solution $u(x,t)$ of the Cauchy problem~\eqref{N-Cauchy}:
\begin{gather*}
u(x,t)=\hat u(y(x,t),t),
\end{gather*}
where $x(y,t)$ and $\hat u(y,t)$ are given by
\begin{gather*}
x(y,t)=y+\frac{1}{2}\ln\frac{\hat M_{33}\big(y,t;\eul^{\frac{\ii\pi}{6}}\big)}{\hat M_{11}\big(y,t;\eul^{\frac{\ii\pi}{6}}\big)},\\
\hat u(y,t)=\frac{1}{2}\hat N_1(y,t)\left(\frac{\hat M_{33}\big(y,t;\eul^{\frac{\ii\pi}{6}}\big)}{\hat M_{11}\big(y,t;\eul^{\frac{\ii\pi}{6}}\big)}\right)^{1/2}+\frac{1}{2}\hat N_3(y,t)\left(\frac{\hat M_{33}\big(y,t;\eul^{\frac{\ii\pi}{6}}\big)}{\hat M_{11}\big(y,t;\eul^{\frac{\ii\pi}{6}}\big)}\right)^{-1/2}-1,
\end{gather*}
with $\hat N_k(y,t):=\sum\limits_{j=1}^3\hat M_{jk}\big(y,t;\eul^{\frac{\ii\pi}{6}}\big)$, $k=1,3$.
\end{Theorem}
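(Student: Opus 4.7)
The plan is to observe that the theorem follows by combining three ingredients already in place: existence of the Cauchy solution (from~\cite{LLW13}), the parametric recovery formulas of Section~\ref{sec:3} starting from the eigenfunction, and uniqueness of the solution to the main RH problem.

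First, I would note that under the assumed conditions on $u_0$, \cite{LLW13} yields a unique global smooth solution $u(x,t)$ of~\eqref{N-Cauchy} with $\hat m(x,t)>0$, and the construction of Section~\ref{eigenfunctions} applied to this $u$ produces the eigenfunction $M(x,t;k)$ via the Fredholm system~\eqref{M-int3}. Using the change of variables $y=y(x,t)$ from~\eqref{y}, set $\hat M(y,t;k):=M(x(y,t),t;k)$. The analysis summarized in Section~\ref{RH-hatM}---the jump~\eqref{jump-y} with $S_0$ of the form~\eqref{S-SS-1}, the residues~\eqref{res}, the normalization from Proposition~\ref{prop:p1}(ii), the singular structure at $\varkappa_\nu$ of Proposition~\ref{p3}, and the explicit form~\eqref{M-1} at $k=\kappa_1$---shows that this $\hat M$ solves the main RH problem (1)--(6). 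Moreover, by the Properties discussion in Section~\ref{RH-hatM}, the data $r(k)$ and $\{(k_n,v_n)\}$ together with all structural prescriptions depend only on $u_0$, so the RH problem appearing in the theorem statement is precisely this one.

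Next, uniqueness of the solution to the main RH problem, established in~\cite[Proposition~3.2]{BS-JGA} and invoked in Property~(iii), implies that the $\hat M$ given in the theorem coincides with the eigenfunction built from the Cauchy solution $u(x,t)$. Applying the Proposition of Section~\ref{sec:3} to this common $\hat M$ then yields the parametric formulas exactly as stated: evaluating~\eqref{M-1} at $k=\eul^{\frac{\ii\pi}{6}}$ gives $\hat M_{33}/\hat M_{11}=\eul^{2(x-y)}$, hence the expression for $x(y,t)$; and multiplying~\eqref{M-1} from the left by $\begin{pmatrix}1&1&1\end{pmatrix}$ yields $\hat N_1,\hat N_3$ as explicit combinations of $u,u_x$ and $\eul^{\pm(x-y)}$, from which~\eqref{N-x2} rearranges into the formula for $\hat u(y,t)$.

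The main obstacle in this plan is the uniqueness statement invoked above. The structural conditions at the exceptional points $\varkappa_\nu$ (where the diagonalizing matrix $P$ degenerates) and $\kappa_l$ (where $z(k)=0$) must be exactly restrictive enough to exclude spurious solutions, while~\eqref{M-struc} only prescribes the \emph{shape} of $\hat M$ at $k=\kappa_l$ without pinning down the auxiliary quantities $\hat q,f,p_1,p_2$. The argument of~\cite[Proposition~3.2]{BS-JGA} circumvents this by showing that the quotient of any two putative RH solutions is forced to be identically $I$ via a Liouville-type argument; all remaining pieces of the proof reduce to bookkeeping against results already proved earlier in the paper.
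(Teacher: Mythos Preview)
Your proposal is correct and follows essentially the same approach as the paper: the theorem is not given a separate proof there but is presented as a synthesis of the preceding results, namely the recovery Proposition of Section~\ref{sec:3}, the verification in Section~\ref{RH-hatM} that the eigenfunction $\hat M$ satisfies the main RH problem, and the existence/uniqueness Properties (ii)--(iii). Your assembly of these ingredients, including the identification of uniqueness as the only nontrivial point and its resolution via~\cite[Proposition~3.2]{BS-JGA}, matches the paper's logic.
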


\begin{Remark}
The solution of the DP equation is extracted from the solution of the associated RH problem in a dif\/ferent way (see \cite[equation~(3.15)]{BS13}).
\end{Remark}

\subsection{Applications}\label{sec:applis}

The representation \eqref{u-x} of the solution of the Cauchy problem \eqref{N-Cauchy} for the Novikov equa\-tion~\eqref{N5} in terms of the solution of the associated main RH problem (1)--(6) (given in Theorem~\ref{thm:main}) can be used in principle, as for other integrable peakon equations (see, e.g., \cite{BIS10,BKST,BLS15,BS06,BS-JGA,BS-D,BS08,BS-F,BS13,BS15,BSZ11}), to construct soliton-type solutions and also to analyse the long time behavior of the solution.

\subsubsection{Soliton-type solutions}

In the case $r(k)\equiv 0$, solving the RH problem reduces to solving a system of linear algebraic equations associated with the residue conditions. Following this way, one can describe not only smooth classical solutions, but also singular and/or multivalued solutions associated with breaks of bijectivity in the change of variables $x\mapsto y$ (see, e.g.,~\cite{BS15,BSZ11}).

\subsubsection{Long-time asymptotics}

Noticing that the form and the $k$-dependence of the jump matrix, see~\eqref{jump-y} and~\eqref{S-SS-1}, are the same as in the case of the DP equation, a reasonable conjecture is that the long-time analysis for the Novikov equation can be done following the same steps as for the DP equation~\cite{BS13}, by adapting the nonlinear steepest descent method by Deift and Zhou~\cite{DZ}.

But there is a problem that prevents from transferring the long-time analysis literally from the case of the DP equation to the Novikov equation. Indeed, in \cite{BLS15, BS13} the long-time analysis was actually done for a matrix RH problem regular at the points $\varkappa_1,\dots,\varkappa_6$ whereas the original RH problem, as in the case of the Novikov equation, was singular at these points. This discrepancy disappeared in \cite{BLS15,BS13} after left multiplying by the row vector $(\begin{matrix} 1 & 1 & 1\end{matrix})$ and thus passing from the matrix to a (regular) row vector RH problem and exploiting the uniqueness of the solution of the latter problem. In turn, the solution of the row vector RH problem evaluated at $\kappa_1,\dots,\kappa_6$, and particularly, at $k=\kappa_1=\eul^{\frac{\ii\pi}{6}}$, gave, in a simple way, the quantity $\eul^{x(y,t)-y}$, from which it was straightforward to deduce a parametric representation for $u(x,t)$, see~\eqref{u-x}.

Notice that a similar situation takes place for other Camassa--Holm-type equations including the Camassa--Holm equation itself, see \cite{BS06,BS08}. On the contrary, in the case of the Novikov equation, the situation is dif\/ferent. Indeed, the solution of the similar row vector RH problem cannot be directly used to recover $\eul^{x(y,t)-y}$ since, being evaluated at $k=\eul^{\frac{\ii\pi}{6}}$, it has the form
\begin{gather*}
\begin{pmatrix}(1+u+u_x)\eul^{y-x}&1&(1+u-u_x)\eul^{x-y}\end{pmatrix},
\end{gather*}
left multiplying~\eqref{M-1} by the row vector $(\begin{matrix} 1 & 1 & 1\end{matrix})$. This issue has forced us to use instead the solution of the matrix RH problem~\eqref{M-1} in~\eqref{u-x}. But now there is an (open) problem of relating the solution of the original (singular) problem to the solution of the limiting (as $t\to\infty$) regular RH problem. In~\cite{BLS15, BS13}, it was the latter problem that was used for obtaining the main asymptotic term of $u(x,t)$ in an explicit form, with parameters determined by the initial data in terms of the associated ref\/lection coef\/f\/icient.

\subsubsection{Initial boundary value problem}

Eigenfunctions associated with the Lax pair equations~\eqref{M} and \eqref{Lax-vec-0} via integral Fredholm equations of type~\eqref{M-int} with an appropriate choice of $(x^*,t^*)$ allows formulating a RH problem suitable for analyzing initial boundary value (or half-line) problems following the procedure presented in~\cite{BLS15} in the case of the DP equation.

\pdfbookmark[1]{References}{ref}
\LastPageEnding


\begin{thebibliography}{99}
\footnotesize\itemsep=0pt

\bibitem{BC}
Beals R., Coifman R.R., Scattering and inverse scattering for f\/irst order
 systems, \href{http://dx.doi.org/10.1002/cpa.3160370105}{\textit{Comm. Pure Appl. Math.}} \textbf{37} (1984), 39--90.

\bibitem{BIS10}
Boutet~de Monvel A., Its A., Shepelsky D., Painlev\'e-type asymptotics for the
 {C}amassa--{H}olm equation, \href{http://dx.doi.org/10.1137/090772976}{\textit{SIAM~J. Math. Anal.}} \textbf{42} (2010),
 1854--1873.

\bibitem{BKST}
Boutet~de Monvel A., Kostenko A., Shepelsky D., Teschl G., Long-time
 asymptotics for the {C}amassa--{H}olm equation, \href{http://dx.doi.org/10.1137/090748500}{\textit{SIAM~J. Math. Anal.}}
 \textbf{41} (2009), 1559--1588, \href{http://arxiv.org/abs/0902.0391}{arXiv:0902.0391}.

\bibitem{BLS15}
Boutet~de Monvel A., Lenells J., Shepelsky D., Long-time asymptotics for the
 {D}egasperis--{P}rocesi equation on the half-line, \href{http://arxiv.org/abs/1508.04097}{arXiv:1508.04097}.

\bibitem{BS06}
Boutet~de Monvel A., Shepelsky D., Riemann--{H}ilbert approach for the
 {C}amassa--{H}olm equation on the line, \href{http://dx.doi.org/10.1016/j.crma.2006.10.014}{\textit{C.~R.~Math. Acad. Sci. Paris}}
 \textbf{343} (2006), 627--632.

\bibitem{BS-JGA}
Boutet~de Monvel A., Shepelsky D., The {C}amassa--{H}olm equation on the
 half-line: a {R}iemann--{H}ilbert approach, \href{http://dx.doi.org/10.1007/s12220-008-9014-2}{\textit{J.~Geom. Anal.}}
 \textbf{18} (2008), 285--323.

\bibitem{BS-D}
Boutet~de Monvel A., Shepelsky D., Long-time asymptotics of the
 {C}amassa--{H}olm equation on the line, in Integrable Systems and Random
 Matrices, \href{http://dx.doi.org/10.1090/conm/458/08932}{\textit{Contemp. Math.}}, Vol.~458, Amer. Math. Soc., Providence,
 RI, 2008, 99--116.

\bibitem{BS08}
Boutet~de Monvel A., Shepelsky D., Riemann--{H}ilbert problem in the inverse
 scattering for the {C}amassa--{H}olm equation on the line, in Probability,
 Geometry and Integrable Systems, \textit{Math. Sci. Res. Inst. Publ.},
 Vol.~55, Cambridge University Press, Cambridge, 2008, 53--75.

\bibitem{BS-F}
Boutet~de Monvel A., Shepelsky D., Long time asymptotics of the
 {C}amassa--{H}olm equation on the half-line, \href{http://dx.doi.org/10.5802/aif.2514}{\textit{Ann. Inst. Fourier
 (Grenoble)}} \textbf{59} (2009), 3015--3056.

\bibitem{BS13}
Boutet~de Monvel A., Shepelsky D., A {R}iemann--{H}ilbert approach for the
 {D}egasperis--{P}rocesi equation, \href{http://dx.doi.org/10.1088/0951-7715/26/7/2081}{\textit{Nonlinearity}} \textbf{26} (2013),
 2081--2107, \href{http://arxiv.org/abs/1107.5995}{arXiv:1107.5995}.

\bibitem{BS15}
Boutet~de Monvel A., Shepelsky D., The {O}strovsky--{V}akhnenko equation by a
 {R}iemann--{H}ilbert approach, \href{http://dx.doi.org/10.1088/1751-8113/48/3/035204}{\textit{J.~Phys.~A: Math. Theor.}} \textbf{48}
 (2015), 035204, 34~pages, \href{http://arxiv.org/abs/1311.0495}{arXiv:1311.0495}.

\bibitem{BSZ11}
Boutet~de Monvel A., Shepelsky D., Zielinski L., The short-wave model for the
 {C}amassa--{H}olm equation: a~{R}iemann--{H}ilbert approach, \href{http://dx.doi.org/10.1088/0266-5611/27/10/105006}{\textit{Inverse
 Problems}} \textbf{27} (2011), 105006, 17~pages.

\bibitem{CH93}
Camassa R., Holm D.D., An integrable shallow water equation with peaked
 solitons, \href{http://dx.doi.org/10.1103/PhysRevLett.71.1661}{\textit{Phys. Rev. Lett.}} \textbf{71} (1993), 1661--1664,
 \href{http://arxiv.org/abs/patt-sol/9305002}{patt-sol/9305002}.

\bibitem{CHH94}
Camassa R., Holm D.D., Hyman J.M., A new integrable shallow water equation,
\href{http://dx.doi.org/10.1016/S0065-2156(08)70254-0}{\textit{Advances in Applied Mechanics}}, Vol.~31, Editors J.W.~Hutchinson, T.Y.~Wu,
 Academic Press, Boston, MA, 1994, 1--33.

\bibitem{Ca82}
Caudrey P.J., The inverse problem for a general {$n\times n$} spectral
 equation, \href{http://dx.doi.org/10.1016/0167-2789(82)90004-5}{\textit{Phys.~D}} \textbf{6} (1982), 51--66.

\bibitem{CCL15}
Chen G., Chen R.M., Liu Y., Existence and uniqueness of the global conservative
 weak solutions for the integrable {N}ovikov equation, \href{http://arxiv.org/abs/1509.08569}{arXiv:1509.08569}.

\bibitem{CGLQ16}
Chen R.M., Guo F., Liu Y., Qu C., Analysis on the blow-up of solutions to a
 class of integrable peakon equations, \href{http://dx.doi.org/10.1016/j.jfa.2016.01.017}{\textit{J.~Funct. Anal.}} \textbf{270}
 (2016), 2343--2374.

\bibitem{C01}
Constantin A., On the scattering problem for the {C}amassa--{H}olm equation,
 \href{http://dx.doi.org/10.1098/rspa.2000.0701}{\textit{R.~Soc. Lond. Proc. Ser.~A Math. Phys. Eng. Sci.}} \textbf{457}
 (2001), 953--970.

\bibitem{CGI06}
Constantin A., Gerdjikov V.S., Ivanov R.I., Inverse scattering transform for
 the {C}amassa--{H}olm equation, \href{http://dx.doi.org/10.1088/0266-5611/22/6/017}{\textit{Inverse Problems}} \textbf{22} (2006),
 2197--2207, \href{http://arxiv.org/abs/nlin.SI/0603019}{nlin.SI/0603019}.

\bibitem{CIL10}
Constantin A., Ivanov R.I., Lenells J., Inverse scattering transform for the
 {D}egasperis--{P}rocesi equation, \href{http://dx.doi.org/10.1088/0951-7715/23/10/012}{\textit{Nonlinearity}} \textbf{23} (2010),
 2559--2575, \href{http://arxiv.org/abs/1205.4754}{arXiv:1205.4754}.

\bibitem{CL03}
Constantin A., Lenells J., On the inverse scattering approach to the
 {C}amassa--{H}olm equation, \href{http://dx.doi.org/10.2991/jnmp.2003.10.3.1}{\textit{J.~Nonlinear Math. Phys.}} \textbf{10}
 (2003), 252--255, \href{http://arxiv.org/abs/math-ph/0403039}{math-ph/0403039}.

\bibitem{DHH02}
Degasperis A., Holm D.D., Hone A.N.W., A new integrable equation with peakon
 solutions, \href{http://dx.doi.org/10.1023/A:1021186408422}{\textit{Theoret. Math. Phys.}} \textbf{133} (2002), 1463--1474,
 \href{http://arxiv.org/abs/nlin.SI/0205023}{nlin.SI/0205023}.

\bibitem{DP99}
Degasperis A., Procesi M., Asymptotic integrability, in Symmetry and
 Perturbation Theory ({R}ome, 1998), World Sci. Publ., River Edge, NJ, 1999,
 23--37.

\bibitem{DZ}
Deift P., Zhou X., A steepest descent method for oscillatory
 {R}iemann--{H}ilbert problems. {A}symptotics for the {MK}d{V} equation,
 \href{http://dx.doi.org/10.2307/2946540}{\textit{Ann. of Math.}} \textbf{137} (1993), 295--368.

\bibitem{F}
Fokas A.S., On a class of physically important integrable equations,
 \href{http://dx.doi.org/10.1016/0167-2789(95)00133-O}{\textit{Phys.~D}} \textbf{87} (1995), 145--150.

\bibitem{G13}
Grayshan K., Peakon solutions of the {N}ovikov equation and properties of the
 data-to-solution map, \href{http://dx.doi.org/10.1016/j.jmaa.2012.08.006}{\textit{J.~Math. Anal. Appl.}} \textbf{397} (2013),
 515--521.

\bibitem{HH12}
Himonas A.A., Holliman C., The {C}auchy problem for the {N}ovikov equation,
 \href{http://dx.doi.org/10.1088/0951-7715/25/2/449}{\textit{Nonlinearity}} \textbf{25} (2012), 449--479.

\bibitem{HLS09}
Hone A.N.W., Lundmark H., Szmigielski J., Explicit multipeakon solutions of
 {N}ovikov's cubically nonlinear integrable {C}amassa--{H}olm type equation,
 \href{http://dx.doi.org/10.4310/DPDE.2009.v6.n3.a3}{\textit{Dyn. Partial Differ. Equ.}} \textbf{6} (2009), 253--289,
 \href{http://arxiv.org/abs/0903.3663}{arXiv:0903.3663}.

\bibitem{HW08}
Hone A.N.W., Wang J.P., Integrable peakon equations with cubic nonlinearity,
 \href{http://dx.doi.org/10.1088/1751-8113/41/37/372002}{\textit{J.~Phys.~A: Math. Theor.}} \textbf{41} (2008), 372002, 10~pages,
 \href{http://arxiv.org/abs/0805.4310}{arXiv:0805.4310}.

\bibitem{JN12}
Jiang Z., Ni L., Blow-up phenomenon for the integrable {N}ovikov equation,
 \href{http://dx.doi.org/10.1016/j.jmaa.2011.06.067}{\textit{J.~Math. Anal. Appl.}} \textbf{385} (2012), 551--558.

\bibitem{J03}
Johnson R.S., On solutions of the {C}amassa--{H}olm equation, \href{http://dx.doi.org/10.1098/rspa.2002.1078}{\textit{R.~Soc.
 Lond. Proc. Ser.~A Math. Phys. Eng. Sci.}} \textbf{459} (2003), 1687--1708.

\bibitem{LLW13}
Lai S., Li N., Wu Y., The existence of global strong and weak solutions for the
 {N}ovikov equation, \href{http://dx.doi.org/10.1016/j.jmaa.2012.10.048}{\textit{J.~Math. Anal. Appl.}} \textbf{399} (2013),
 682--691.

\bibitem{L02}
Lenells J., The scattering approach for the {C}amassa--{H}olm equation,
 \href{http://dx.doi.org/10.2991/jnmp.2002.9.4.2}{\textit{J.~Nonlinear Math. Phys.}} \textbf{9} (2002), 389--393,
 \href{http://arxiv.org/abs/nlin.SI/0306021}{nlin.SI/0306021}.

\bibitem{M05}
Matsuno Y., The {$N$}-soliton solution of the {D}egasperis--{P}rocesi equation,
 \href{http://dx.doi.org/10.1088/0266-5611/21/6/018}{\textit{Inverse Problems}} \textbf{21} (2005), 2085--2101,
 \href{http://arxiv.org/abs/nlin.SI/0511029}{nlin.SI/0511029}.

\bibitem{M06}
Matsuno Y., Cusp and loop soliton solutions of short-wave models for the
 {C}amassa--{H}olm and {D}egasperis--{P}rocesi equations, \href{http://dx.doi.org/10.1016/j.physleta.2006.06.065}{\textit{Phys.
 Lett.~A}} \textbf{359} (2006), 451--457.

\bibitem{M13}
Matsuno Y., Smooth multisoliton solutions and their peakon limit of {N}ovikov's
 {C}amassa--{H}olm type equation with cubic nonlinearity, \href{http://dx.doi.org/10.1088/1751-8113/46/36/365203}{\textit{J.~Phys.~A:
 Math. Theor.}} \textbf{46} (2013), 365203, 27~pages, \href{http://arxiv.org/abs/1305.6728}{arXiv:1305.6728}.

\bibitem{MN02}
Mikhailov A.V., Novikov V.S., Perturbative symmetry approach,
 \href{http://dx.doi.org/10.1088/0305-4470/35/22/309}{\textit{J.~Phys.~A: Math. Gen.}} \textbf{35} (2002), 4775--4790,
 \href{http://arxiv.org/abs/nlin.SI/0203055}{nlin.SI/0203055}.

\bibitem{NZ11}
Ni L., Zhou Y., Well-posedness and persistence properties for the {N}ovikov
 equation, \href{http://dx.doi.org/10.1016/j.jde.2011.01.030}{\textit{J.~Differential Equations}} \textbf{250} (2011), 3002--3021.

\bibitem{N}
Novikov V., Generalizations of the {C}amassa--{H}olm equation,
 \href{http://dx.doi.org/10.1088/1751-8113/42/34/342002}{\textit{J.~Phys.~A: Math. Theor.}} \textbf{42} (2009), 342002, 14~pages,
 \href{http://arxiv.org/abs/0905.2219}{arXiv:0905.2219}.

\bibitem{PY15}
Pan C., Yi Y., Some extensions on the soliton solutions for the {N}ovikov
 equation with cubic nonlinearity, \href{http://dx.doi.org/10.1080/14029251.2015.1033243}{\textit{J.~Nonlinear Math. Phys.}}
 \textbf{22} (2015), 308--320.

\bibitem{Q06}
Qiao Z., A new integrable equation with cuspons and {W}/{M}-shape-peaks
 solitons, \href{http://dx.doi.org/10.1063/1.2365758}{\textit{J.~Math. Phys.}} \textbf{47} (2006), 112701, 9~pages.

\bibitem{Q07}
Qiao Z., New integrable hierarchy, its parametric solutions, cuspons, one-peak
 solitons, and {M}/{W}-shape peak solitons, \href{http://dx.doi.org/10.1063/1.2759830}{\textit{J.~Math. Phys.}}
 \textbf{48} (2007), 082701, 20~pages.

\bibitem{T11}
T{\i}{\u{g}}lay F., The periodic {C}auchy problem for {N}ovikov's equation,
 \href{http://dx.doi.org/10.1093/imrn/rnq267}{\textit{Int. Math. Res. Not.}} \textbf{2011} (2011), 4633--4648,
 \href{http://arxiv.org/abs/1009.1820}{arXiv:1009.1820}.

\bibitem{YLZ12}
Yan W., Li Y., Zhang Y., The {C}auchy problem for the integrable {N}ovikov
 equation, \href{http://dx.doi.org/10.1016/j.jde.2012.03.015}{\textit{J.~Differential Equations}} \textbf{253} (2012), 298--318.

\end{thebibliography}
\end{document}